\DeclarePairedDelimiter\floor{\lfloor}{\rfloor}
\tikzset{every arrow subpath/.style={->, draw, thick}}
\newcommand{\IC}{\ensuremath{\mathbb{C}}}
\newcommand{\IQ}{\ensuremath{\mathbb{Q}}}
\newcommand{\IIF}{\ensuremath{\mathbb{F}}}
\newcommand{\IR}{\ensuremath{\mathbb{R}}}
\newcommand{\IN}{\ensuremath{\mathbb{N}}}
\newcommand{\per}{\mathrm{per}}
\newcommand{\GL}{\mathsf{GL}}
\newcommand{\End}{\mathsf{End}}
\newcommand{\GG}{\mathsf{G}}
\newcommand{\ag}{\mathfrak{g}}
\newcommand{\Ein}{\mathsf{in}}
\newcommand{\Eout}{\mathsf{out}}
\newcommand{\im}{\mathop{\textup{im}}}
\newcommand{\IMM}{f_{\textup{com}}}
\newcommand{\corr}{\textup{\textsf{corr}}}
\newcommand{\wt}{\mathsf{wt}}
\newcommand{\vi}{\text{\textcircled{$\cdot$}}}
\newcommand{\vstar}{\text{\textcircled{$\ast$}}}
\numberwithin{equation}{section}
\newtheorem{theorem}[equation]{Theorem}
\newtheorem{lemma}[equation]{Lemma}
\newtheorem{proposition}[equation]{Proposition}
\newtheorem{claim}[equation]{Claim}
\theoremstyle{definition}
\newtheorem{remark}[equation]{Remark}
\newtheorem{definition}[equation]{Definition}
\newtheorem{example}[equation]{Example}
\newcommand{\wW}{\ensuremath{\mathsf{w}}}
\newcommand{\nc}{\ensuremath{\mathsf{nc}}}
\newcommand{\m}{\ensuremath{\mathsf{m}}}
\newcommand{\IZ}{\ensuremath{\mathbb{Z}}}
\newcommand{\aS}{\ensuremath{\mathfrak{S}}}
\newcommand{\VNP}{\mathrm{VNP}}
\newcommand{\VP}{\mathrm{VP}}
\newcommand{\VQP}{\mathrm{VQP}}
\newcommand{\tr}{\ensuremath {\mathsf{tr}}}
\newcommand{\rk}{\ensuremath{\mathsf{rk}}}
\newcommand{\rat}{\ensuremath {\mathbb{Q}}}
\title{
% Algebraic Branching Programs and approximations: The monotone and the noncommutative case
\mbox{Algebraic Branching Programs, border complexity, and tangent spaces}
}
\author{
Markus Bl\"aser\thanks{Department of Computer Science, Saarland University, Saarland Informatics Campus, Saarbr\"ucken, Germany},
Christian Ikenmeyer\thanks{University of Liverpool. Part of this research was done when CI was at the Max Planck Institute for Software Systems, Saarbr\"ucken, Germany. CI was supported by DFG grant IK 116/2-1},
Meena Mahajan\thanks{The Institute of Mathematical Sciences, HBNI, Chennai, India},
Anurag Pandey\thanks{Max Planck Institute for Informatics, Saarland Informatics Campus, Saarbr\"ucken, Germany},
Nitin Saurabh\thanks{Technion-IIT, Haifa, Israel. Part of this work was done when the author was at the Max Planck Institute for Informatics, Saarbr\"ucken, Germany}}
\begin{document}
\raggedbottom

\maketitle

\begin{abstract}
Nisan showed in 1991 that the width of a smallest noncommutative single-(source,sink) algebraic branching program (ABP) to compute a noncommutative polynomial is given by the ranks of specific matrices. This means that the set of noncommutative polynomials with ABP width complexity at most $k$ is Zariski-closed, an important property in geometric complexity theory. It follows that approximations cannot help to reduce the required ABP width.

It was mentioned by Forbes that this result would probably break when going from single-(source,sink) ABPs to trace ABPs. We prove that this is correct. Moreover, we study the commutative monotone setting and prove a result similar to Nisan, but concerning the analytic closure. We observe the same behavior here: The set of polynomials with ABP width complexity at most $k$ is closed for single-(source,sink) ABPs and not closed for trace ABPs. The proofs reveal an intriguing connection between tangent spaces and the vector space of flows on the ABP.
We close with additional observations on VQP and the closure of VNP which allows us to establish a separation between the two classes.
\end{abstract}

\thispagestyle{empty}
\newpage
\setcounter{page}{1}

\section{Introduction and Results}\label{sec:intro}
Algebraic branching programs (ABPs) are an elegant model of computation that is widely studied in algebraic complexity theory (see e.g.\ \cite{bc:88, toda:92, mv:97, MP:08, aw:16, AFSSV:16, KNST:18, Kumar2019, FMST:19}) and is a focus of study in geometric complexity theory \cite{Lan:15, Ges:16, GIP:17}.
An ABP is a layered directed graph with $d+1$ layers of vertices (edges only go from layers $i$ to $i+1$)
such that the first and last layer have exactly the same number of vertices, so that each vertex in the first layer has exactly one so-called \emph{corresponding} vertex in the last layer.
One interesting classical case is when the first and last layer have exactly one vertex, which is usually studied in theoretical computer science.
We call this the \emph{single-(source,sink) model}.
Among algebraic geometers working on ABPs it is common to not impose restrictions on the number of vertices in the first and last layer \cite{Lan:15, Ges:16, Lan:17}.
We call this the \emph{trace model}.
Every edge in an ABP is labeled by a homogeneous linear form.
If we denote by $\ell(e)$ the homogeneous linear form of edge $e$, then we say that the ABP computes
$\sum_{p} \prod_{e \in p} \ell(e)$,
where the sum is over all paths that start in the first layer and end in the last layer at the vertex corresponding to the start vertex.

The \emph{width} of an ABP is the number of vertices in its largest layer.
We denote by $\wW(f)$ the minimal width required to compute $f$ in the trace model and we call $\wW(f)$ the \emph{trace ABP width complexity} of~$f$.
We denote by $\wW_1(f)$ the minimal width required to compute $f$ in the single-(source,sink) model and we call $\wW_1(f)$ the \emph{single-(source,sink) ABP width complexity} of~$f$.

The complexity class VBP is defined as the set of sequences of polynomials $(f_m)$ for which the sequence $\wW(f_m)$ is polynomially bounded.
Let $\per_m := \sum_{\pi \in \aS_m}\prod_{i=1}^m x_{i,\pi(i)}$ be the permanent polynomial. Valiant's famous $\text{VBP}\neq\text{VNP}$ conjecture can concisely be stated as
``The sequence of natural numbers $\big(\wW(\per_m)\big)_m$ is not polynomially bounded.'' Alternatively, this is phrased with $\wW_1$ or other polynomially related complexity measures in a completely analogous way.
In geometric complexity theory (GCT), one searches for lower bounds on algebraic complexity measures over $\IC$ such as $\wW$ and $\wW_1$ for explicit polynomials such as the permanent. All lower bounds methods in GCT and most lower bounds methods in algebraic complexity theory are \emph{continuous}, which means that if $f_\varepsilon$ is a curve of polynomials with $\lim_{\varepsilon \to 0} f_\varepsilon = f$ (coefficient-wise limit) and $\wW(f_\varepsilon) \leq w$, then these methods cannot be used to prove $\wW(f) > w$. This is usually phrased in terms of so-called \emph{border complexity} (see e.g.~\cite{BLMW:11,Lan:15}):
The \emph{border trace ABP width complexity} $\underline{\wW}(f)$ is the smallest $w$ such that $f$ can be approximated arbitrarily closely by polynomials $f_\varepsilon$ with $\wW(f_\varepsilon) \leq w$.
Analogously, we define the \emph{border single-(source,sink) ABP width complexity} $\underline{\wW_1}(f)$ as the smallest $w$ such that $f$ can be approximated arbitrarily closely by polynomials $f_\varepsilon$ with $\wW_1(f_\varepsilon) \leq w$.
Analogously to VBP we define $\overline{\text{VBP}}$ as the set of sequences of polynomials whose $(\underline{\wW}(f_m))$ is polynomially bounded.
Clearly $\text{VBP} \subseteq \overline{\text{VBP}}$.
Mulmuley and Sohoni \cite{gct1, gct2, BLMW:11} (see also \cite{bue:01} for a related conjecture) conjectured a strengthening of Valiant's conjecture, namely that $\text{VNP} \not\subseteq \overline{\text{VBP}}$.
In principle it could be that $\underline{\wW}(f) < \wW(f)$; the gap could even be superpolynomial, which would mean that $\text{VBP} \subsetneq \overline{\text{VBP}}$.
If $\text{VBP} = \overline{\text{VBP}}$, then Valiant's conjecture is the same as the Mulmuley-Sohoni conjecture, which would mean that if $\text{VBP}\neq\text{VNP}$, then continuous lower bounds methods exist that show this separation.

Border complexity is an old area of study in algebraic geometry. In theoretical computer science it was introduced by Bini et al.~\cite{bcrl:79}, where \cite{Bini:80} proves that in the study of fast matrix multiplication, the gap between complexity and border complexity is not too large.
The study of the gap between complexity and border complexity of algebraic complexity measures in general has started recently \cite{GMQ:16, BIZ:18, Kumar2018} as an approach to understand if strong algebraic complexity lower bounds can be obtained from continuous methods.

In this paper we study two very different settings of ABPs: The noncommutative and the monotone setting.
To capture commutative, noncommutative, and monotone computation, let $R$ be a graded semiring with homogeneous components $R_d$.
In our case the settings for $R_d$ are
\begin{itemize}
\item $R_d=\IIF[x_1,\ldots,x_m]_d$ the set of homogeneous degree $d$ polynomials in $m$ variables over a field~$\IIF$,
\item $R_d=\IIF\langle x_1,\ldots,x_m\rangle_d$ the set of homogeneous degree $d$ polynomials in $m$ noncommuting variables over a field~$\IIF$,
\item $R_d=\IR_{+}[x_1,\ldots,x_m]_d$ the set of homogeneous degree $d$ polynomials in $m$ variables with nonnegative coefficients.
\end{itemize}
As it is common in the theoretical computer science literature, we call elements of $R_d$ \emph{polynomials}. Note that $\IIF\langle x_1,\ldots,x_m\rangle_d$ is naturally isomorphic to the $d$-th tensor power of $\IIF^m$, so \emph{tensor} would be the better name. We hope that no confusion arises when in the later sections (where we use concepts from multilinear algebra) we use the tensor language.
In the \emph{homogeneous setting}, all ABP edge labels are in $R_1$, and hence the polynomial that is computed is in $R_d$.
In the \emph{affine setting}, all ABP edge labels are in $R_0+R_1$, and hence the polynomial that is computed is in $\bigoplus_{d'\leq d}R_{d'}$.

\subsection*{Noncommutative ABPs}
Let $R_d=\IIF\langle x_1,\ldots,x_m\rangle_d$ and consider the homogeneous setting. We write $\nc\wW$ instead of $\wW$ and $\nc\wW_1$ instead of $\wW_1$ to highlight that we are in the noncommutative setting.
Nisan \cite{nisan1991lower} proved:
\begin{theorem}\label{thm:nisan}
Let $M_i$ denote the $n^i \times n^{d-i}$ matrix whose entry at position $((k_1,\ldots,k_i),(k_{i+1},\ldots,k_d))$ is the coefficient of the monomial $x_{k_1}x_{k_2}\cdots x_{k_d}$ in $f$.
Then every single-(source,sink) ABP computing $f$ has at least $\rk(M_i)$ many vertices in layer $i$. Conversely, there exists a single-(source,sink) ABP computing $f$ with exactly $\rk(M_i)$ many vertices in layer $i$.
\end{theorem}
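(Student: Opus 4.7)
The plan is to prove the two inequalities separately: a lower bound via a factorization argument on $M_i$, and a matching construction for the upper bound.

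For the lower bound, take any ABP computing $f$ and fix a layer $i$ with $w_i$ vertices. For each vertex $v$ of layer $i$, let $L_v \in R_i$ denote the sum over all source-to-$v$ paths of the product of the edge labels, and define $R_v \in R_{d-i}$ analogously for $v$-to-sink paths. Then $f = \sum_v L_v \cdot R_v$. Reading off monomial coefficients on both sides exhibits a factorization $M_i = A_i B_i$, where $A_i$ is the $n^i \times w_i$ matrix whose $v$-th column holds the coefficients of $L_v$ and $B_i$ is the $w_i \times n^{d-i}$ matrix whose $v$-th row holds the coefficients of $R_v$. Hence $\rk(M_i) \le w_i$.

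For the upper bound, I would reconstruct an ABP directly from the rank structure of the $M_i$. Define the ``residual'' $f_\alpha \in R_{d-i}$ for each noncommutative monomial $\alpha$ of degree $i$ via the identity $f = \sum_{|\alpha|=i} \alpha \cdot f_\alpha$, and set $V_i := \mathrm{span}\{f_\alpha : |\alpha|=i\} \subseteq R_{d-i}$. By construction $\dim V_i = \rk(M_i) =: r_i$; in particular, for $f \neq 0$, both $V_0 = \IIF f$ and $V_d \subseteq R_0 \cong \IIF$ have dimension $1$. The crucial algebraic identity is the recursion
\[
f_\alpha \;=\; \sum_{y=1}^{n} x_y \cdot f_{\alpha x_y},
\]
which yields the inclusion $V_i \subseteq \sum_{y} x_y V_{i+1}$.

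Now pick any basis $b^{(i)}_1, \ldots, b^{(i)}_{r_i}$ of each $V_i$, choosing $b^{(0)}_1 := f$. These basis vectors will index the layer-$i$ vertices of the ABP. Using the inclusion, expand each basis element as $b^{(i)}_j = \sum_{k=1}^{r_{i+1}} \ell^{(i)}_{jk} \cdot b^{(i+1)}_k$ for some homogeneous linear forms $\ell^{(i)}_{jk} \in R_1$, and label the edge from the $j$-th vertex of layer $i$ to the $k$-th vertex of layer $i+1$ by $\ell^{(i)}_{jk}$. A reverse induction on $i$ then verifies that the continuation polynomial $R_{v^{(i)}_j}$ equals $b^{(i)}_j$, so the resulting ABP computes $b^{(0)}_1 = f$ with exactly $r_i$ vertices in layer $i$. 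Since $r_0 = r_d = 1$, this is automatically a single-(source,sink) ABP. No serious obstacle is anticipated: the whole argument rests on the recursion above together with routine linear algebra, and the main point to take care with is ensuring that the basis at layer $0$ is actually $f$ (not a scalar multiple) so that the source-to-sink polynomial is $f$ itself.
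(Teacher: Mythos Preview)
The paper does not supply its own proof of this theorem; it simply attributes the result to Nisan \cite{nisan1991lower} and uses it as a black box. Your argument is correct and is essentially Nisan's original proof: the lower bound via the factorization $M_i = A_i B_i$ coming from $f = \sum_v L_v R_v$, and the upper bound by building the ABP layer-by-layer from bases of the residual spaces $V_i$.

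One small point to tidy up in the construction: you take care to normalize the layer-$0$ basis by setting $b^{(0)}_1 = f$, but the reverse induction showing $R_{v^{(i)}_j} = b^{(i)}_j$ also needs the base case $b^{(d)}_1 = 1$ at the sink (since the empty product there is $1$). Without this, the ABP computes a nonzero scalar multiple of $f$. Since $V_d = \IIF$ whenever $f \neq 0$, this choice is available, so just add it explicitly alongside your choice of $b^{(0)}_1$.
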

Nisan used this formulation to prove strong complexity lower bounds for the noncommutative determinant and permanent.
Forbes \cite{Forbes16video} remarked that Theorem~\ref{thm:nisan} implies that for fixed $w$
\begin{equation}\label{eq:Zariskiclosed}
\text{the set } \{f \mid \nc\wW_1(f) \leq w \} \text{ is Zariski-closed}
\end{equation}
and hence that
\begin{equation}\label{eq:nisan}
\underline{\nc\wW_1}(f) = \nc\wW_1(f) \text{ for all $f$}.
\end{equation}

Proving a similar result (even up to polynomial blowups) in the commutative setting would be spectacular: It would imply $\text{VBP}=\overline{\text{VBP}}$ and hence that Valiant's conjecture is the same as the Mulmuley-Sohoni conjecture.
By a general principle, for all standard algebraic complexity measures, over $\IC$ we have that the Zariski-closure of a set of polynomials of complexity at most $w$ equals the Euclidean closure \cite[\S2.C]{mum:complprojvars}.

Forbes mentioned that he believes that Nisan's proof cannot be lifted to the trace model. In this paper we prove that Forbes is correct, by constructing a polynomial $f_0$ with
\begin{equation}\label{eq:counterexample}
\underline{\nc\wW}(f_0) < \nc\wW(f_0).
\end{equation}

The proof is given in Sections~\ref{sec:f0-construction}--\ref{sec:flows}. It is a surprisingly subtle application of differential geometry (inspired by \cite{HL:16})
and interprets tangent spaces to certain varieties as vector spaces of flows on an ABP digraph.

The gap between $\underline{\nc\wW}(f)$ and ${\nc\wW}(f)$ can never be very large though:

\begin{samepage}
\begin{equation}\label{eq:sandwich}
\underline{\nc\wW}(f) \leq \nc\wW(f) \leq \nc\wW_1(f)
\stackrel{\eqref{eq:nisan}}{=} \underline{\nc\wW_1}(f)
\stackrel{\footnotemark}{\leq} 
\big(\underline{\nc\wW}(f)\big)^2 \text{ for all $f$}.
\end{equation}
\footnotetext{Given a trace ABP $\Gamma$ computing $f$ and a pair of corresponding start and end vertices, we can extract a single-(source,sink) ABP by deleting all other start and end vertices. If we do this for each pair of start and end vertices, and if we then idenfity all start vertices to a single start vertex, and also all end vertex to a single end vertex, then we obtain a single-(source,sink) ABP computing $f$. The width has grown by a factor of $w$, where $w$ is the number of start vertices in $\Gamma$.}
\end{samepage}

It is worth noting that for our separating polynomial $f_0$, the gap is even less; $\underline{\nc\wW}(f_0)<\nc\wW(f_0) \leq 2\underline{\nc\wW}(f_0)$.
This is the first algebraic model of computation where complexity and border complexity differ, but their gap is known to be polynomially bounded!
For most models of computation almost nothing is known about the gap between complexity and border complexity. For commutative width 2 affine ABPs the gap is even as large as between computable and non-computable \cite{BIZ:18}!

\subsection*{Monotone ABPs}
Let $R_d=\IR_{+}[x_1,\ldots,x_m]_d$ and consider the affine or homogeneous setting.

Since $\IR$ is not algebraically closed, we switch to a more algebraic definition of approximation.
Let $\IR[\varepsilon,\varepsilon^{-1}]_{+}$ denote the ring of Laurent polynomials that are nonnegative for all sufficiently small~$\varepsilon>0$. Clearly, elements from $\IR[\varepsilon,\varepsilon^{-1}]_{+}$ can have a pole at $\varepsilon=0$ of arbitrarily high order.
We define $\underline{\m\wW}(f)$ to be the smallest $w$ such that there exists a polynomial $f'$ over the ring $\IR[\varepsilon,\varepsilon^{-1}]_{+}$ such that
\begin{itemize}
 \item there exists a width $w$ ABP over $\IR[\varepsilon,\varepsilon^{-1}]_{+}$ that computes $f'$,
 \item no coefficient in $f'$ contains an $\varepsilon$ with negative exponent, and setting $\varepsilon$ to 0 in $f'$ yields $f$, i.e., $f'_{\varepsilon=0} = f$.
\end{itemize}

We prove a result that is comparable to \eqref{eq:nisan}, but uses a very different proof technique:
\begin{equation}\label{eq:nisanmonotone}
\underline{\m\wW_1}(f) = \m\wW_1(f) \text{ for all $f$}.
\end{equation}
In terms of complexity classes this can concisely be written as
\[
\text{MVBP}=\overline{\text{MVBP}}^{\IR}.
\]
Our proof also works if the ABP is not layered and the labels are affine.

Intuitively, in this monotone setting, one would think that approximations do not help, because there cannot be cancellations. But quite surprisingly the same construction as in \eqref{eq:counterexample} can be used to find $f_0$ such that
\begin{equation}\label{eq:counterexamplemonotone}
\underline{\m\wW}(f_0) < \m\wW(f_0).
\end{equation}
By the same reasoning as in \eqref{eq:sandwich}, we obtain
\begin{equation}
\underline{\m\wW}(f) \leq \m\wW(f) \leq \big(\underline{\m\wW}(f)\big)^2 \text{ for all $f$}.
\end{equation}
This gives a natural monotone model of computation where approximations speed up the computation. Again, the gap is polynomially bounded!

\subsection*{Separating \texorpdfstring{$\VQP$}{VQP} from \texorpdfstring{$\overline{\VNP}$}{VNPbar}}
B\"urgisser in his monograph \cite{Burgisser2000} defined the complexity class $\VQP$ as the class of polynomials with quasi-polynomially bounded straight-line programs,  and established its relation to the classes $\VP$ and $\VNP$ (see Section \ref{sec:vqp} for definitions). He showed that the determinant polynomial is $\VQP$-complete with respect to the so-called $qp$-projections (see \cite{Burgisser2000}, Corollary 2.29). He strengthened Valiant's hypothesis of $\VNP \not\subseteq \VP$ to $\VNP \not\subseteq \VQP$ and called it \emph{Valiant's extended hypothesis} (see \cite{Burgisser2000}, section 2.5).
He further showed that $\VP$ is strictly contained in $\VQP$ as one would intuitively expect (see \cite{Burgisser2000}, section 8.2).
Finally, he also showed that $\VQP$ is not contained in $\VNP$ (see \cite{Burgisser2000}, Proposition 8.5 and Corollary 8.9). In this article, we observe that his proof is stronger and actually shows that $\VQP$ is not contained in $\overline{\VNP}$ either, where $\overline{\VNP}$ is the closure of the complexity class $\VNP$ in the sense as mentioned above.

\subsection*{Structure of the paper}

In Section~\ref{sec:monotoneclosed} we prove \eqref{eq:nisanmonotone}.
Sections~\ref{sec:f0-construction} to~\ref{sec:flows} are dedicated to proving \eqref{eq:counterexample} and \eqref{eq:counterexamplemonotone} via a new connection between tangent spaces and flow vector spaces.
In Section~\ref{sec:vqp}, we discuss the separation between $\mathrm{VQP}$ and $ \overline{\mathrm{VNP}}$.

\section{Related work}
Grenet \cite{grenet:11} showed that $\m\wW(\per_m) \leq \binom{m}{\lceil m/2\rceil}$ by an explicit construction of a monotone single-(source,sink) ABP.
Even though the construction is monotone, its size is optimal for $m=3$ \cite{abv:15} (for 4 this is already unknown).
The noncommutative version of this setting has been studied in \cite{FMST:19}.
\cite{Yeh:19} recently showed that the monotone circuit classes MVP and MVNP are different. We refer the reader to \cite{Yeh:19} and \cite{Sri:19} and the references therein to get more information about monotone algebraic models of computation and their long history.

\cite{HL:16} present a method that can be used to show that a complexity measure and its border variant are not the same. They used it to prove that an explicit polynomial has border determinantal complexity 3, but higher determinantal complexity. We use their ideas as a starting point in Section~\ref{sec:f0-construction} and the later sections.

\section{Preliminaries}
% An ABP $\Gamma$ is a layered directed graph with $d+1$ layers of vertices
% such that the first and last layer have exactly the same number of vertices (one interesting classical case is when the first and last layer have exactly one vertex)
% and such that edges can only go from layer $i$ to layer $i+1$, $1 \leq i \leq d$.
% Note here that we only consider homogeneous ABPs and that we only study the computation of homogeneous elements.
% In particular, all paths that start in layer $i$ and end in layer $i+k$ have length exactly $k$.

For a homogeneous degree $d$ ABP $\Gamma$,
we denote by $V$ the set of vertices of $\Gamma$ and by $V^i$ the set of vertices in layer $i$, $1 \leq i \leq d+1$.
We choose an explicit bijection between the sets $V^1$ and $V^{d+1}$, so that each vertex $v$ in $V^1$ has exactly one \emph{corresponding} vertex $\corr(v)$ in $V^{d+1}$.
We denote by $E^i$ the set of edges from $V^i$ to $V^{i+1}$. Let $E$ denote the union of all $E^i$.

There is a classical interpretation in terms of iterated matrix multiplication:
Fix some arbitrary ordering of the vertices within each layer, such that the $i$-th vertex in $V^1$ corresponds to the $i$-th vertex in $V^{d+1}$.
For $1\leq k \leq d$ let $M_k$ be the $|V^k|\times|V^{k+1}|$ matrix whose
entry at position $(i,j)$ in $M_k$ is the label from the $i$-th vertex in $V^k$ to the $j$-th vertex in $V^{k+1}$.
Then $\Gamma$ computes the trace
\begin{equation}\label{eq:trace}
\sum_{\substack{1\leq k_1 \leq |V^1| \\ 1\leq k_2 \leq |V^2| \\ \vdots \\ 1\leq k_d \leq |V^d|}} (M_1)_{k_1,k_2} (M_2)_{k_2,k_3} \cdots (M_{d-1})_{k_{d-1},k_d} (M_d)_{k_d,k_1} = \tr\big(M_1 M_2 \cdots M_d \big).
\end{equation}
Hence the name \emph{trace model}. In the single-(source,sink) model,
the trace is taken of a $1 \times 1$ matrix.

\section{Monotone commutative single-(source,sink) ABPs are closed}\label{sec:monotoneclosed}
For fixed $w \in \IN$ we study
\begin{equation}\label{eq:Zariskiclosedmonotone}
\text{the set } \{f \in  \IR_+[x_1,\ldots,x_n]_d \mid \m\wW_1(f) \leq w \}.
\end{equation}
We first start with the simple observation that it is \emph{not} Zariski-closed.
\begin{proposition}
$\{f \in \IR_+[x_1,\ldots,x_n]_d \mid \m\wW_1(f) \leq w \}$ is not Zariski-closed.
\end{proposition}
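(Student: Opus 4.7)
The plan is to exploit the fact that the monotonicity constraint---having nonnegative coefficients---is not itself a Zariski-closed condition on $\IR[x_1,\ldots,x_n]_d$. The set in \eqref{eq:Zariskiclosedmonotone} is closed under scaling by nonnegative reals (one just rescales any one edge label of the computing ABP), so it contains nonzero rays through the origin; a general fact about the Zariski topology on real affine space then forces the Zariski closure of such a ray to be the full line, which produces the counterexample.

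Concretely, I would pick the witness $f_0 := x_1^d \in \IR_+[x_1,\ldots,x_n]_d$, computed by a width-$1$ monotone ABP with every edge labeled $x_1$, so that $f_0$ lies in the set for every $w \geq 1$. Rescaling the first edge from $x_1$ to $c\, x_1$ for any $c \geq 0$ shows that the whole ray $\{c f_0 : c \geq 0\}$ is contained in the set. I would then identify the Zariski closure of this ray inside $\IR[x_1,\ldots,x_n]_d$: any polynomial function $p$ on coefficient space that vanishes on $\{c f_0 : c \geq 0\}$ restricts to a one-variable real polynomial $c \mapsto p(c f_0)$ vanishing on the half-line $[0,\infty)$, hence vanishing identically. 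Consequently $p$ vanishes on the entire line $\{c f_0 : c \in \IR\}$, so the Zariski closure of the ray equals this full line and in particular contains $-f_0$.

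To close the argument, I would observe that $-f_0 = -x_1^d$ has a negative coefficient and is therefore not even an element of $\IR_+[x_1,\ldots,x_n]_d$, let alone of the set in \eqref{eq:Zariskiclosedmonotone}. The set thus misses a point of its own Zariski closure, so it is not Zariski-closed. I do not anticipate a serious obstacle here; the only mild point to be careful with is that we work over $\IR$ rather than an algebraically closed field, but this only strengthens the argument, since real polynomials vanishing on a half-line already vanish everywhere, and no special structure of the monotone ABP model beyond closure under nonnegative scaling is used.
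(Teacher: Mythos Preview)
Your proof is correct, but it takes a different route from the paper's. The paper argues abstractly: the set in question is the image $F((\IR_+)^N)$ of the nonnegative orthant under the polynomial ``evaluate the ABP'' map $F$, and since $F$ is Zariski-continuous and the Zariski closure of $(\IR_+)^N$ inside $\IC^N$ is all of $\IC^N$, one gets $\overline{F((\IR_+)^N)}=\overline{F(\IC^N)}\supseteq F(\IC^N)\supsetneq F((\IR_+)^N)$. In other words, the paper shows that the Zariski closure of the monotone set is as large as that of the full (non-monotone) ABP set, and then invokes the trivial observation that the latter contains non-monotone polynomials. Your argument is more elementary and more explicit: you fix a single nonzero $f_0$ in the set, use closure under nonnegative scaling to get the half-line $\{cf_0:c\ge 0\}$, and observe that the Zariski closure of a real half-line is the whole line, producing the concrete witness $-x_1^d$. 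What you gain is simplicity and an explicit point outside the set; what the paper's argument gains is generality (as it notes, the same reasoning applies to any natural monotone complexity measure presented as the image of a polynomial map on nonnegative parameters) and a stronger conclusion about how large the Zariski closure actually is.
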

\begin{proof}
An analogous statement is true for all natural algebraic complexity measures.
Note that a homogeneous degree $d$ single-(source,sink) width $w$ ABP has $2w+w^2(d-2)$ many edges. The label on each edge is a linear form in $n$ variables, so such an ABP is determined by $N := n(2w+w^2(d-2))$ many parameters. Let $F : \IC^N \to \IC[x_1,\ldots,x_n]_d$ be the map that maps these parameters to the polynomial computed by the ABP. Every coordinate function of $F$ is given by polynomials in $N$ variables, so $F$ is Zariski-continuous. Therefore
\[
\overline{F((\IR_+)^N)} = \overline{F(\overline{(\IR_+)^N})} = \overline{F(\IC^N)} \supseteq F(\IC^N) \supsetneqq F((\IR_+)^N), \]
where the overline means the Zariski closure.
\end{proof}
Recall that an ABP has $d+1$ layers of vertices. 
If an ABP has $w_i$ many vertices in layer $i$, $1 \leq i \leq d$, we say the ABP has \emph{format} $w = (w_1,w_2,\ldots,w_d)$. We further recall that $w_{d+1} = w_1$.
The following theorem is our closure result, which proves \eqref{eq:nisanmonotone} and hence $\text{MVBP}=\overline{\text{MVBP}}^{\IR}$.
\begin{theorem}\label{thm:monotoneclosed}
Given a polynomial $f$ over $\IR$ and given a format $w$ single-(source,sink) ABP with affine linear labels over $\IR[\varepsilon,\varepsilon^{-1}]_+$ computing $f_\varepsilon$ such that $\lim_{\varepsilon\to 0}f_\varepsilon=f$.
Then there exists a format $w$ monotone single-(source,sink) ABP that computes $f$.
\end{theorem}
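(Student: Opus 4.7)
The plan is to extract a monotone ABP from the $\varepsilon$-deformed one by a tropical (leading-order) analysis, exploiting the fact that every nonzero element of $\IR[\varepsilon,\varepsilon^{-1}]_{+}$ has a strictly positive lowest-order $\varepsilon$-coefficient. For each edge $e$, write the affine label as $\ell(e)=\alpha_0(\varepsilon)+\sum_i\alpha_i(\varepsilon)\,x_i$ with each $\alpha_i\in\IR[\varepsilon,\varepsilon^{-1}]_{+}$. I define the valuation $\nu(e)\in\IZ$ to be the minimum $\varepsilon$-exponent appearing in any nonzero $\alpha_i$, and let $L(e)\in\IR_+[x_1,\ldots,x_n]$ be the corresponding affine ``leading polynomial''; positivity of the lowest-order coefficient of each $\alpha_i$ guarantees that the coefficients of $L(e)$ are nonnegative. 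For any source-to-sink path $p$ one has
\[
\prod_{e\in p}\ell(e)=\varepsilon^{\nu(p)}\prod_{e\in p}L(e)+O(\varepsilon^{\nu(p)+1}),\qquad\nu(p):=\sum_{e\in p}\nu(e).
\]
Define $\mu(v)$ to be the minimum of $\nu(p)$ over all source-to-$v$ paths $p$ (so $\mu(\text{source})=0$).

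The critical monotonicity step is that the coefficient of $\varepsilon^{\mu(\text{sink})}$ in $f_\varepsilon$ equals $\sum_{p:\,\nu(p)=\mu(\text{sink})}\prod_{e\in p}L(e)$, a finite sum of elements of $\IR_+[x_1,\ldots,x_n]$, which can vanish only if every summand vanishes. Combined with the hypothesis that $f_\varepsilon$ has no negative $\varepsilon$-powers, this forces $\mu(\text{sink})\geq 0$. If $\mu(\text{sink})>0$ then every nonzero contribution to $f_\varepsilon$ sits in positive $\varepsilon$-order, so $f=0$ and the format-$w$ ABP with all-zero labels computes $f$.

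In the remaining case $\mu(\text{sink})=0$, build the monotone ABP on the same layered digraph by putting $\ell'(e):=L(e)$ on every tight edge $e=(u,v)$ satisfying $\mu(v)=\mu(u)+\nu(e)$, and $\ell'(e):=0$ on every other edge. These labels are in $\IR_+[x_1,\ldots,x_n]$, so the ABP is monotone and of the same format $w$. A simple telescoping along any source-to-sink path $p$ shows that $\prod_{e\in p}\ell'(e)\neq 0$ if and only if every edge of $p$ is tight, which happens exactly when $\nu(p)=\mu(\text{sink})=0$. Hence the new ABP computes $\sum_{p:\,\nu(p)=0}\prod_{e\in p}L(e)$, which is precisely the $\varepsilon^0$-coefficient of $f_\varepsilon$, namely $f$.

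The main obstacle is justifying that the $\varepsilon^0$-coefficient of $f_\varepsilon$ is captured exactly by this tight-path sum, with no interference from higher-order expansions of the edge labels on non-tight paths; this is exactly where the positivity of $\IR[\varepsilon,\varepsilon^{-1}]_{+}$ is indispensable, and it is also the reason the same argument must fail in the trace model, where contributions of different source-sink pairs can partially cancel and enable the separation \eqref{eq:counterexamplemonotone}. Since the construction uses only the DAG structure and never that the labels are homogeneous, it simultaneously yields the stronger statement mentioned immediately after the theorem.
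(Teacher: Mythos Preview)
Your proof is correct, and it takes a genuinely different route from the paper's.

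The paper proceeds by an iterative rescaling algorithm in two phases: first it pushes every negative $\varepsilon$-exponent to the edges incident to the source (by repeatedly multiplying all outgoing edges of a vertex by $\varepsilon^{i}$ and all incoming edges by $\varepsilon^{-i}$), obtaining an ABP $\Gamma^{i}$ over $\IR[\varepsilon]_{+}$ that computes $\varepsilon^{i}f$. Then it repeatedly converts $\Gamma^{i}$ into $\Gamma^{i-1}$: since $\Gamma^{i}$ is monotone and its value is divisible by $\varepsilon$, every source-to-sink path must contain an $\varepsilon$-edge, so the set $\Delta$ of vertices reachable by non-$\varepsilon$-edges misses the sink; rescaling at leaf vertices of $\Delta$ eventually makes every edge out of the source an $\varepsilon$-edge, and one divides them by $\varepsilon$. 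After $i$ rounds one sets $\varepsilon=0$.

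Your argument is a one-shot valuation-theoretic version of the same underlying idea. You replace the inductive rescaling by the potential $\mu(v)=\min_{p:\text{source}\to v}\nu(p)$ and pick out tight edges directly. The crucial monotonicity step---that the lowest $\varepsilon$-order coefficient of $f_\varepsilon$ is $\sum_{p:\nu(p)=\mu(\text{sink})}\prod_e L(e)$ and hence cannot vanish---is exactly the non-cancellation fact the paper exploits when it asserts that every path in $\Gamma^{i}$ must use an $\varepsilon$-edge.

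What each approach buys: yours is shorter and makes the tropical/shortest-path structure explicit; in particular it makes transparent why the argument is indifferent to homogeneity and layering (only the DAG structure and a single source and sink are used), as you note in your last paragraph. The paper's approach has the mild advantage that it never zeroes out an edge---it only rescales labels by powers of $\varepsilon$---so it in fact shows that the original ABP and the final monotone ABP differ only by monomial rescalings of the labels, a slightly finer structural statement than the theorem requires. Both proofs rest on the same single-(source,sink) feature: your $\mu$ is a potential relative to \emph{one} source, and the paper's $\Delta$ is reachability from \emph{one} source; in the trace model several source--sink pairs share edges and neither construction goes through, consistent with \eqref{eq:counterexamplemonotone}.
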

\begin{proof}
The proof is constructive and done by a two-step process.
In the first step (which is fairly standard and works in many computational models) we move all the $\varepsilon$ with negative exponents to edges adjacent to the source. The second step then uses the monotonicity.

Given $\Gamma$ with affine linear labels over $\IR[\varepsilon,\varepsilon^{-1}]_+$ we repeat the following process until
all labels that contain an $\varepsilon$ with a negative exponent are incident to the source vertex.
\begin{compactitem}
 \item Let $e$ be an edge whose label contains $\varepsilon$ with a negative exponent $-i<0$. Moreover, assume that $e$ is not incident to the source vertex. Let $v$ be the start vertex of~$e$. We rescale all edges outgoing of $v$ with $\varepsilon^{i}$ and we rescale all edges incoming to $v$ with $\varepsilon^{-i}$.
\end{compactitem}
If we always choose the edge with the highest layer, then it is easy to see that this process terminates.
Since every path from the source to the sink that goes through a vertex $v$ must use exactly one edge that goes into $v$ and exactly one edge that comes out of $v$, throughout the process the value of $\Gamma$ does not change.
We finish this first phase by taking the highest negative power $i$ among all labels of edges that are incident to the source and then rescale all these edges with $\varepsilon^i$.
The resulting ABP $\Gamma^i$ computes $\varepsilon^i f$ and no label contains an $\varepsilon$ with negative exponent.
%A run of phase 1 is depicted in Figure~\debug{XXX}.
We now start phase 2
%(depicted in Figure~\debug{XXX})
that transforms $\Gamma^i$ into $\Gamma^{i-1}$ that computes $\varepsilon^{i-1} f$ without introducing negative exponents of $\varepsilon$.
We repeat phase 2 until we reach $\Gamma^0$ in which we safely set $\varepsilon$ to 0.
Throughout the whole process we do not change the structure of the ABP and only rescale edge labels with powers of $\varepsilon$, which preserves monotonicity, so the proof is finished.
It remains to show how $\Gamma^i$ can be transformed into $\Gamma^{i-1}$.
An edge whose label is divisible by $\varepsilon$ is called an \emph{$\varepsilon$-edge}.
Consider the %strongly connected component
set $\Delta$ of vertices that are reachable from the source using only
non $\varepsilon$-edges in $\Gamma^i$.
%that contains the source.
The crucial insight is that since $\Gamma^i$ is monotone and computes a polynomial that is divisible by $\varepsilon$, we know that every path in $\Gamma^i$ from the source to the sink uses an $\varepsilon$-edge.
Therefore $\Delta$ cannot contain the sink.
We call a vertex in $\Delta$ whose outdegree is zero a \emph{leaf} vertex. We repeat the following procedure until the source is the only leaf vertex.
\begin{compactitem}
\item %Let $e$ be an $\varepsilon$-edge that is incident to a non-source leaf vertex $v$.
  Let $v$ be a non-source leaf vertex in $\Delta$. We rescale all edges outgoing of $v$ with $\varepsilon^{-1}$ and we rescale all edges incoming to $v$ with $\varepsilon$.
\end{compactitem}
It is easy to see that this process terminates with the source being the only leaf vertex.
Since the source is a leaf vertex, all edges incident to the source are $\varepsilon$-edges. We divide all their labels by $\varepsilon$ to obtain $\Gamma^{i-1}$.
\end{proof}

\section{Explicit construction of $f_0$ with higher complexity than border complexity}
\label{sec:f0-construction}
Fix some $d\geq 3$. In this section for every $m \geq 2$ we construct $f_0$ such that
\begin{equation}\label{eq:goal}
m = \underline{\nc\wW}(f_0)<\nc\wW(f_0). %\text{ and } \underline{\sS^{\Itrivial}}(f_0)<\sS^{\Itrivial}(f_0).
\end{equation}
A completely analogous construction can be used to find $f_0$ with $\underline{\wW}(f_0)<\wW(f_0)$ and with $\underline{\m\wW}(f_0)<\m\wW(f_0)$. For the sake of simplicity, we carry out only the proof for \eqref{eq:goal}.

% We recall that an ABP for $f_0$ will have $d+1$ layers. 
% If an ABP has $w_i$ many vertices in layer $i$, $1 \leq i \leq d$, we say the ABP has \emph{format} $w = (w_1,w_2,\ldots,w_d)$.
Recall that in a format $w$ ABP we have $w_{d+1} = w_1$.
In each layer $i$ we enumerate the vertices $V^i = \{v^i_1,\ldots, v^i_{w_i}\}$ and we assume without loss of generality that the correspondence bijection between $V^{d+1}$ and $V^{1}$
is the identity on the indices $j$ of $v^1_{j}$, i.e., the $j$th vertex in $V^1$ corresponds to the $j$th vertex in $V^{d+1}$.

Fix an ABP format $w = (w_1,w_2,\ldots,w_d)$ such that for all $i$, $w_i \geq 2$.
Let $\Gamma_{\textup{com}}$ denote the directed acyclic graph underlying an ABP of format $w$.
An edge can be described by the triple $(a,b,i)$, where $1 \leq i \leq d$, $1 \leq a \leq w_i$ and $1 \leq b \leq w_{i+1}$. Consider the following labeling of the edges with triple-indexed variables:
$\ell_{\textup{com}}((a,b,i)) = x^{(i)}_{(a,b)}$. Define $\IMM$ to be the polynomial computed by $\Gamma_{\textup{com}}$ with edge labels $\ell_{\textup{com}}$.

We now construct $f_0$ as follows. Let $d$ be odd (the case when $d$ is even works analogously).
Since in each layer we enumerated the vertices, we can now assign to each vertex its parity: even or odd.
We call an edge between two even or two odd vertices \emph{parity preserving}, while we call the other edges \emph{parity changing}. Let us consider the following labeling of $\Gamma_{\textup{com}}$:
We set $\ell_0((a,b,i)) := x^{(i)}_{(a,b)}$ if $(a,b,i)$ is parity changing (i.e., $a \not\equiv b \pmod 2$)
and set the label
$\ell_0((a,b,i)) := \varepsilon x^{(i)}_{(a,b)}$ otherwise, where $\varepsilon \in \IC$. 
Let $f'_\varepsilon$ be the polynomial computed by $\Gamma_{\textup{com}}$ with edge labels $\ell_0$ and set
$f_\varepsilon := \frac 1 \varepsilon f'_\varepsilon$ for $\varepsilon\neq 0$.
We define $f_0 := \lim_{\varepsilon\to 0} f_\varepsilon$ (convergence follows from the construction, because $d$ is odd).
By definition, for all $\varepsilon \neq 0$, $f_\varepsilon$ can be computed by a format $w$ ABP. However, we will now prove that this property fails for the limit point~$f_0$.

\begin{theorem}\label{thm:separation}
Fix an ABP format $w = (w_1,w_2,\ldots,w_d)$ such that for all $i$, $w_i \geq 2$.
Let $f_0$ be defined as above. Then, $f_0$ \emph{cannot} be computed by an ABP of format $w$.
\end{theorem}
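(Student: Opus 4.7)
The plan is to suppose for contradiction that a format-$w$ ABP $\Gamma'$ computes $f_0$ exactly, and to derive a contradiction by exploiting the $\IZ$-grading built into the construction.

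First I would spell out what $f_0$ is combinatorially. A closed length-$d$ walk in $\Gamma_{\textup{com}}$ starting and ending at the same vertex of $V^1$ must change parity an even number of times; since $d$ is odd, the number of parity-preserving edges it uses is odd, and hence at least~$1$. A walk using $k$ parity-preserving edges contributes a factor $\varepsilon^k$ to $f'_\varepsilon$, so after dividing by $\varepsilon$ and letting $\varepsilon \to 0$ only the $k=1$ terms survive. Thus $f_0$ is the sum, over closed length-$d$ walks $P$ in $\Gamma_{\textup{com}}$ that use exactly one parity-preserving edge, of the monomial $\prod_{e\in P} x_e$.

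Next I would introduce the grading that assigns weight $0$ to every parity-changing variable $x^{(i)}_{(a,b)}$ and weight $1$ to every parity-preserving one. In this grading, $f_0$ is homogeneous of weight~$1$. Decomposing each hypothetical label as $L_e = M_e + N_e$ according to weight and extracting the weight-$1$ piece of $\Gamma'$'s output yields the identity
\begin{equation*}
f_0 \;=\; \sum_{P \in \mathrm{paths}(\Gamma')} \; \sum_{e^* \in P} N_{e^*} \prod_{e \in P \setminus \{e^*\}} M_e.
\end{equation*}
The right-hand side is exactly $dF|_M(N)$, the differential at $M$ of the ABP-evaluation map $F$ applied in direction~$N$. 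So the assumption that $\Gamma'$ computes $f_0$ would force $f_0$ to lie in the tangent space at $F(M)$ to the image of $F$, for some choice of the purely parity-changing labeling~$M$.

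The main, and hardest, step is then to rule out this containment for every choice of~$M$. Here the connection advertised in the abstract enters: the tangent space at $F(M)$ admits a natural combinatorial description as a space of ``flows'' on the digraph of $\Gamma'$, generalizing the ``vary one label at a time'' intuition. I would transport this flow description to $\Gamma_{\textup{com}}$ via the map induced by $M$, and then, exploiting the bipartite structure that the parity coloring imposes on the parity-changing subgraph of $\Gamma_{\textup{com}}$, perform a dimension/rank count showing that the explicit monomial description of $f_0$ cannot match any element in the flow image. Turning this flow-theoretic obstruction into an honest contradiction is the crux, and I expect it to occupy essentially all of Sections~\ref{sec:f0-construction}--\ref{sec:flows}.
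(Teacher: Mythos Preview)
Your opening combinatorial description of $f_0$ and the weight-grading observation are both correct: if a format-$w$ ABP computes $f_0$ with labels $L=M+N$ split by weight, then indeed $f_0 = dF|_M(N)$. The gap is in the step you flag as ``the main, and hardest'': the assertion that $f_0\notin\im(dF|_M)$ for every weight-$0$ labeling $M$ is false. Take $M$ to assign $x^{(i)}_{(a,b)}$ to each parity-changing edge and $0$ to each parity-preserving edge, and $N$ the complementary labeling. Then $M+\varepsilon N$ is precisely the labeling $\ell_0$ used to construct $f_0$, so $F(M+\varepsilon N)=f'_\varepsilon=\varepsilon f_0+O(\varepsilon^2)$, hence $F(M)=0$ and $dF|_M(N)=f_0$ on the nose. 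Your weight-$1$ extraction discards the higher-weight constraints (vanishing of the weight~$\geq 2$ pieces of $F(M+N)$), and without those there is nothing left to contradict: the very construction of $f_0$ already witnesses the tangent-space containment you hope to rule out. No ``bipartite'' dimension count on the parity-changing subgraph can recover from this, because the obstruction you need is simply not visible at first order in your grading.

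The paper's route is structurally different and does not use the weight grading at all. It identifies the set of format-$w$-computable tensors with the monoid orbit $\End\,\IMM$ and splits it as $\GG\,\IMM\cup(\End\setminus\GG)\,\IMM$. The singular piece is eliminated by proving that $f_0$ is \emph{concise} while nothing in $(\End\setminus\GG)\,\IMM$ is (Section~\ref{sec:concise}). The regular piece is eliminated by an orbit-dimension comparison: one computes $\dim T_{\IMM}$ and $\dim T_{f_0}$ (tangent spaces to the $\GG$-orbits at these two specific points) and shows they differ, so $f_0$ cannot lie in $\GG\,\IMM$. The flow vector spaces enter only here, as a tool to pin down $\dim(\ag_0\,\IMM)$ versus $\dim(\ag_0\,f_0)$; the parity structure produces one extra linear relation among the generators of $\ag_0 f_0$, dropping its dimension strictly below that of $\ag_0\,\IMM$. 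Thus the flows describe a summand of the $\GG$-tangent space at two \emph{fixed} points, and the contradiction is a dimension mismatch between two orbits---not a non-membership of $f_0$ in $\im(dF|_M)$ at a variable base point~$M$.
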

Note that for a format where $m=w_1=\cdots=w_d$, this gives the $f_0$ which was desired in \eqref{eq:goal}.
(Note, however, that $f_0$ can be computed by an ABP of width $2m$ as follows.
Construct an ABP $\Gamma'$ that has, for each vertex $v\in \Gamma_{\textup{com}}$,
vertices $v'$ and $v''$. For each parity changing edge
$(a,b) \in \Gamma_{\textup{com}}$ with label $\ell_0$, add edges $(a',b')$ and
$(a'',b'')$ with the same label $\ell_0$. For each parity preserving
edge $(a,b) \in \Gamma_{\textup{com}}$ with label $\ell_0$, add edge
$(a',b'')$ with label $(\frac 1 \varepsilon)\ell_0$. For corresponding
vertices $u,v$ in $\Gamma_{\textup{com}}$, let $v''$ be the
corresponding vertex for $u'$ and $v'$ be the corresponding vertex for
$u''$ in $\Gamma'$. 
All paths between corresponding vertices in this ABP use exactly one
parity preserving edge of $\Gamma_{\textup{com}}$, and so this ABP computes $f_0$.)

The proof of Theorem~\ref{thm:separation} works as follows.
Let $\GG := \GL_{w_1 w_2} \times \GL_{w_2 w_3} \times \cdots \times \GL_{w_d w_{d+1}}$.
Let $\End := \overline{G}$ denote its Euclidean closure, i.e., tuples of matrices in which one or several matrices can be singular.

We consider noncommutative homogeneous polynomials in the variables $x_{(a,b)}^{(i)}$ such that the $i$-th variable in each monomial is $x_{(a,b)}^{(i)}$ for some $a \in [w_i]$ and $ b \in [w_{i+1}]$. The vector space of these polynomials is isomorphic to
$W := \IC^{w_1 w_2} \otimes \IC^{w_2 w_3} \otimes \cdots \otimes \IC^{w_d w_{d+1}}$ and the monoid $\End$ (and thus also the group $\GG$) acts on this space in the canonical way.
The set
\[
\{f \in W \mid f \text{ can be computed by a format $w$ ABP}\}
\]
is precisely the orbit $\End \IMM$.
We follow the overall proof strategy in \cite{HL:16}. The monoid orbit $\End \IMM$ decomposes into two disjoint orbits:
\[
\End \IMM = \GG \IMM \cup (\End \setminus \GG) \IMM.
\]
Our goal is to show two things independently:
\begin{enumerate}
\item $f_0 \notin (\End \setminus \GG) \IMM$, and
\item $f_0 \notin \GG \IMM$,
\end{enumerate}
which finishes the proof of Theorem~\ref{thm:separation}.

All elements in $(\End \setminus \GG) \IMM$ are \emph{not concise},
a term that we define in Section \ref{sec:concise}, where we also prove that $f_0$ is concise. Therefore $f_0 \notin (\End \setminus \GG) \IMM$.

All elements in $\GG \IMM$ have \emph{full orbit dimension},
a term that we define in Section \ref{sec:orbitdimension} %where we also
and we prove that $f_0$ does \emph{not} have full orbit dimension in Section \ref{sec:flows}.
This finishes the proof of Theorem~\ref{thm:separation}.

% \debug{Generalize this construction to fully prove Thm~\ref{thm:gap}! This should not be too hard.}
% 
% \debug{New section for size, using $w_1=\cdots=w_d=2$, but one of the $w_i$ is larger. Then reducing size means either reducing $w_i$ or reducing one $w_j$ to 1, which means that the tensor ``factors''.}

\section{Conciseness}\label{sec:concise}
%\debug{section missing, but should not be very hard:}

In this section we show that $f_0 \notin (\End \setminus \GG) \IMM$. To do so
we use a notion called \emph{conciseness}. Informally, it captures whether
a polynomial depends on all variables independent of a change of basis, or a tensor cannot be embedded into a tensor product of smaller spaces.

Given a tensor $f$ in $\IC^{m_1}\otimes \IC^{m_2} \otimes \cdots \otimes \IC^{m_d}$, we associate the following matrices with $f$.
For $j \in[d]$, define a matrix $M_f^j$ of dimension
$m_j \times (\prod_{i \in [d]\setminus \{j\}}m_i)$ with rows labeled by the standard
basis of $\IC^{m_j}$, and columns by elements in the Cartesian product
$\{\text{standard basis of } \IC^{m_1}\} \times \cdots \times \{\text{standard basis of } \IC^{m_{j-1}}\} \times \{\text{standard basis of } \IC^{m_{j+1}}\}\times \cdots \times \{\text{standard basis of } \IC^{m_d}\}$.
We write the tensor $f$ in the standard basis
\[
f = \sum_{\substack{1 \leq i_1 \leq m_1\\1 \leq i_2 \leq m_2\\\vdots\\1 \leq i_d \leq m_d}} \alpha_{i_1,\ldots,i_d} e_{i_1}\otimes \cdots \otimes e_{i_d}
\]
and associate to it the matrix $M_f^j$
whose entry at position $((i_j),(i_1,i_2,\ldots,i_{j-1},i_{j+1},\ldots,i_d))$ is $\alpha_{i_1,\ldots,i_d}$.

\begin{definition}
  \label{def:concise}
  We say that a tensor $f$ in $\IC^{m_1}\otimes \IC^{m_2} \otimes \cdots \otimes \IC^{m_d}$ is \emph{concise} if and only if for all $j \in [d]$, $M_f^j$ has full rank. 
\end{definition}

%\debug{Anurag: connecting this definition to the Jesko's definition via the injection to a commutative case}
%\debug{Christian: We have to do that when discussing the commutative cases.}

As a warm-up exercise we now show that $\IMM$ is concise.

\begin{proposition}
  \label{prop:fcom-concise}
  $\IMM$ is concise. 
\end{proposition}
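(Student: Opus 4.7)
The plan is to unpack $\IMM$ as the iterated-trace polynomial from \eqref{eq:trace} and then exhibit, for each flattening direction $j\in[d]$, a set of columns of $M_{\IMM}^{j}$ that witnesses full row rank explicitly. Since every monomial of $\IMM$ has coefficient $1$ and monomials are in bijection with closed cyclic sequences $(k_1,k_2,\ldots,k_d)$ (via $x^{(1)}_{(k_1,k_2)} x^{(2)}_{(k_2,k_3)} \cdots x^{(d)}_{(k_d,k_1)}$), each entry of $M_{\IMM}^{j}$ is either $0$ or $1$, determined by whether a tuple of edge indices actually links up into a valid cyclic walk.

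Fix $j\in[d]$. The rows of $M_{\IMM}^{j}$ are indexed by pairs $(a,b)\in[w_j]\times[w_{j+1}]$ (the edges in layer $j$), and the columns by $(d-1)$-tuples $(y_1,\ldots,y_{j-1},y_{j+1},\ldots,y_d)$ with $y_i=(a_i,b_i)\in[w_i]\times[w_{i+1}]$. For a given row $(a,b)$, I would construct a ``witness'' column as follows: declare $k_j=a$, $k_{j+1}=b$, and choose the remaining $k_i$ (for $i\neq j,j+1$) to be any value that cyclically closes the path; in the generic interior case $j\in\{2,\ldots,d-1\}$ one can take $k_i=1$ for all other $i$, so that the witness column consists of $(1,1)$ in all layers except $(1,a)$ in layer $j-1$ and $(b,1)$ in layer $j+1$. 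The boundary cases $j=1$ and $j=d$ are handled by instead padding the non-row layers with the constant value $a$ (respectively $b$) so that the cyclic closure $k_{d+1}=k_1$ is satisfied; these are the only slightly fiddly cases.

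The point is that, with this witness column fixed, reading off a different row $(a',b')$ forces the monomial $x^{(1)}_{y_1}\cdots x^{(j)}_{(a',b')}\cdots x^{(d)}_{y_d}$ to satisfy the linking constraint $b_{i}=a_{i+1}$ cyclically. The entries in $y_{j-1}$ and $y_{j+1}$ (or, in the boundary cases, in $y_d$ and $y_1$) already pin down $k_j$ and $k_{j+1}$, so the linking conditions force $a'=a$ and $b'=b$. Hence the witness column has a $1$ in row $(a,b)$ and $0$ everywhere else, and the rows of $M_{\IMM}^{j}$ are therefore linearly independent. This is exactly full rank.

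The main (and essentially only) obstacle is the bookkeeping at the cyclic boundary, i.e.\ ensuring the constructed witness is actually a closed cyclic walk when $j\in\{1,d\}$; this is handled by choosing all free $k_i$ to agree with the value that must reappear at the wrap-around. Aside from that, the argument is a direct and finite verification, and it gives conciseness of $\IMM$ for every format $w$.
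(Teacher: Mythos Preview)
Your approach is correct and essentially the same as the paper's. The paper argues that (i) every row of $M_{\IMM}^j$ is nonzero (every edge lies on some valid path) and (ii) every column has at most one nonzero entry (fixing $d-1$ of the edges determines the remaining one uniquely, if at all); together these give disjoint row supports and hence linear independence. Your explicit witness-column construction is a concrete realisation of exactly this: for each row you exhibit one column that is nonzero only in that row, which is (i) together with the instance of (ii) that you actually need.

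One small bookkeeping remark: your suggestion to pad the free indices with the constant value $a$ (respectively $b$) in the boundary cases $j=1$ and $j=d$ is unnecessary and can even fail when the layer widths vary (you would need $a\le w_i$ for all $i$). In fact, padding with $1$ works uniformly for \emph{all} $j$, including $j\in\{1,d\}$: once you set $k_j=a$ and $k_{j+1}=b$ (indices taken cyclically, so $k_{d+1}=k_1$), the layer-$d$ edge is simply $(k_d,k_1)$ and the closure condition is automatic. So there is no genuine boundary subtlety here, and your ``fiddly cases'' dissolve.
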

\begin{proof}
  We know that $\IMM \in W$. Let us consider the matrix $M_{\IMM}^j$ for some
  $j \in [d]$. To establish that $M_{\IMM}^j$ has full rank, it suffices to
  show that rows are linearly independent. In order to show that, we argue
  that every row is non-zero and every column has at most one non-zero entry.
  In other words, rows are supported on disjoint sets of columns.  

  A row of $M_{\IMM}^j$ is labeled by an edge in the $j$-th layer of the ABP
  $\Gamma_{\textup{com}}$. Recall that only paths that start at
  a vertex in $V^1$ and end at the corresponding vertex in $V^{d+1}$ contribute
  to the computation in $\Gamma_{\textup{com}}$. We call such paths
  \emph{valid paths}. An entry in $M_{\IMM}^j$ is non-zero iff the corresponding
  row and column labels form a valid path in $\Gamma_{\textup{com}}$. Thus, it is
  easily seen that a row is non-zero iff there is a valid path in
  $\Gamma_{\textup{com}}$ that \emph{passes through} the edge given by the row
  label. By the structure of $\Gamma_{\textup{com}}$, in particular that every
  layer is a complete bipartite graph, we observe that passing through every edge there is some valid path.
  Hence, we obtain that every row is non-zero.

  The second claim now follows from the observation that fixing $d-1$ edges either defines a unique $d$th edge so that these $d$ edges form a valid path,
  or for these $d-1$ edges there is no such $d$th edge.
\end{proof}

As mentioned in Section~\ref{sec:f0-construction}, to establish
$f_0 \notin (\End \setminus \GG) \IMM$ we will show that $f_0$ is concise while
any element in $(\End \setminus \GG) \IMM$ is not. 

\begin{lemma}
  \label{lem:f0-concise}
  $f_0$ is concise. 
\end{lemma}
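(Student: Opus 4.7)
The plan is to mimic the proof of Proposition~\ref{prop:fcom-concise}. First I would write down $f_0$ explicitly. By construction
\[
f'_\varepsilon \;=\; \sum_{p} \varepsilon^{s(p)} \prod_{i=1}^d x^{(i)}_{(a^p_i,\, b^p_i)},
\]
where $p = (u_1, u_2, \ldots, u_{d+1})$ ranges over valid paths in $\Gamma_{\textup{com}}$ (i.e.\ $u_{d+1} = \corr(u_1)$), and $s(p)$ counts the parity-preserving edges of $p$. Since $u_1$ and $u_{d+1}$ share an index and hence a parity, the number of parity-changing edges along $p$ is even, so $s(p) \equiv d \equiv 1 \pmod 2$; in particular $s(p) \geq 1$. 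Therefore $\varepsilon^1$ is the lowest $\varepsilon$-power appearing in $f'_\varepsilon$ and
\[
f_0 \;=\; \lim_{\varepsilon\to 0} \varepsilon^{-1} f'_\varepsilon \;=\; \sum_{\substack{\text{valid paths } p \\ s(p) = 1}} \prod_{i=1}^d x^{(i)}_{(a^p_i,\, b^p_i)}.
\]

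Next, fix $j \in [d]$. The matrix $M_{f_0}^j$ has rows indexed by edges $(a,b,j)$ in layer $j$ and columns indexed by tuples of edges in the remaining layers; an entry equals $1$ iff the row edge together with the column edges forms a valid path with $s(p) = 1$, and $0$ otherwise. Exactly as in Proposition~\ref{prop:fcom-concise}, any column with a nonzero entry already determines the vertices $u_1, \ldots, u_{d+1}$ along the putative path, and in particular forces $a = u_j$ and $b = u_{j+1}$. Hence the rows of $M_{f_0}^j$ are supported on disjoint sets of columns, and it suffices to prove that each row is nonzero.

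To exhibit a valid path through a prescribed edge $(a_j, b_j, j)$ with $s(p) = 1$, I would split on whether that edge is itself parity-preserving: if it is, I insist that every other edge of the path be parity-changing; if it is not, I pick some $i_0 \in [d] \setminus \{j\}$, declare the $i_0$th edge the unique parity-preserving one, and make all remaining edges parity-changing. In either case the prescribed pattern together with $u_j = a_j$ and $u_{j+1} = b_j$ determines the required parity of every $u_i$; since $w_i \geq 2$ in every layer, vertices of both parities are available in each $V^i$ and a compatible path can always be assembled. The main obstacle here is endpoint compatibility: the parity forced on $u_{d+1}$ has to agree with the parity of $u_1$, because $u_{d+1} = \corr(u_1)$. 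A direct count of parity changes along $p$ reduces this condition to $d - 1 \equiv 0 \pmod 2$, which is exactly the odd-$d$ hypothesis already in force. Once every row of $M_{f_0}^j$ is shown nonzero, the disjoint-support observation gives full row rank for every $j$, and $f_0$ is concise.
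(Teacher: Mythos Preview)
Your proposal is correct and follows essentially the same strategy as the paper: both show that the columns of $M_{f_0}^j$ have at most one nonzero entry and that every row is nonzero by exhibiting, through each edge, a valid path with exactly one parity-preserving edge. Your parity-counting argument for endpoint compatibility is a slightly more streamlined version of the paper's explicit case analysis on whether $j-1$ and $d-j$ are even or odd, but the underlying construction is the same.
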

\begin{proof}
  Analogous to the proof of Proposition~\ref{prop:fcom-concise}, we again show
  that every row of $M_{f_0}^j$ is non-zero and every column of it has at most
  one non-zero entry. That is, rows of $M_{f_0}^j$ are supported on
  disjoint sets of columns.
  
  From the construction of $f_0$ it is seen that a path in $\Gamma_{\textup{com}}$
  contributes to the computation of $f_0$ iff it is a valid path that comprises
  of \emph{exactly one} parity preserving edge. The second claim of every
  column having at most one non-zero entry now follows for the same reason as
  in the proof of Proposition~\ref{prop:fcom-concise}.
  
  Before proving the first claim, we recall two assumptions in the construction
  of $f_0$. The first is that the format  $w = (w_1,w_2,\ldots,w_d)$ is such
  that $w_i \geq 2$ for all $i \in[d]$ and the second is that $d$ is odd. To argue
  that a row is non-zero it suffices to show that a valid path comprising of
  only one parity preserving edge passes through the edge given by the row
  level. Let us consider an arbitrary edge $e$ in $\Gamma_{\textup{com}}$.
  We have two cases to consider depending on whether it is parity
  \emph{preserving} or \emph{changing}.

  \textbf{Case 1.} Suppose $e$ is parity preserving and it belongs to a layer
  $j \in [d]$. The number of layers on the left of $e$ is $j-1$ and on the
  right is $d-j$. Since $d$ is odd, these numbers are either both even or both odd.
  We now argue for the case when they are even (the odd case is analogous).
  Choose a vertex $v$ in $V^1$ that has the same parity
  (different in the odd case) as one of the end points of $e$.
  (Such a choice exists because $w_1 \geq 2$.) We now claim that there exists
  a valid path starting at $v$ that passes through $e$ and contains exactly one
  parity preserving edge. Since $e$ is parity preserving, all edges in the
  claimed path must be parity changing. We observe that $e$ can be easily
  extended in both directions using parity changing edges such that the path
  ends at $\corr(v)$. The existence of parity changing edges at each layer uses the
  assumption that $w_i \geq 2$.

  \textbf{Case 2.} Otherwise $e =(a,b)$ is parity changing. Again as before
  there are two cases based on whether both $j-1$ and $d-j$ are even or odd.
  Consider the case when they are even (the odd case being analogous).
  We first assume that $j \neq d$.
  Choose
  a vertex $v$ in $V^1$ that has the same parity as $a$. We now construct a
  valid path from $v$ to $\corr(v)$ that passes through $e$ and contains exactly one
  parity preserving edge. 
  It is easily seen that there exists a path from $v$ to $a$ using only
  parity changing edges. We choose a parity preserving outgoing edge incident
  to $b$. We call its endpoint $v_1$. Since $v_1$ and $v$ have different parities,
  we can connect $v_1$ to $\corr(v)$ in $V^{d+1}$ using only parity changing edges.
  Thus we obtain the following valid path $v \to\cdots \to a\to b\to v_1 \to\cdots \to\corr(v)$ passing
  through exactly one parity preserving edge $(b,v_1)$.
  In the case that $j = d$, choose an incoming parity preserving edge incident on $a$ instead of an outgoing edge on $b$.
\end{proof}

%\begin{remark}
%  \label{rem:proof-fails}
%  We note that if the format $w = (w_1,\ldots ,w_d)$ defining $f_0$ is
%  such that for some $j \in [d]$, $w_j = 1$, then $f_0$ is \emph{not} concise.
%  This can be seen as follows.
%  Let $w_j=1$ and let $v$ denote the unique vertex in $V^j$.
%  We first consider the case where $j<d$.
%  Let $e$ be a parity preserving edge that starts at $v$.
%  Let $v' \in V^{j+1}$ be the end vertex of~$e$.
%  Let $e'$ be a parity preserving edge that starts at $v'$.
%  By construction, every path from $V^1$ to $V^{d+1}$ that uses $e'$ must also use $e$.
%  Hence there is no path from $V^1$ to $V^{d+1}$ that uses $e'$ and only one parity preversing edge.
%  Hence the corresponding row in the matrix $M_{f_0}^{j+1}$ is zero.
%  Therefore $f_0$ is not concise.
%  
%  An analogous argument works if $j=d$. Here $e$ is chosen to be a parity preserving edge that ends at $v$.
%  Its starting vertex is called $v'$, and $e'$ is a parity preserving edge that ends at $v'$.
%  Now the argument works exactly as before.
%  
%  This is an interesting observation, because this is
%  the point where our proof fails for single-(source,sink) ABPs,
%  and this is expected, because Nisan~\cite{nisan1991lower} had shown that the set of
%  polynomials computed by such ABPs of format $w$ is a closed set.
%\end{remark}

\begin{remark}
  \label{rem:proof-fails}
  We note that if the format $w = (w_1,\ldots ,w_d)$ defining $f_0$ is
  such that for some $j \in [d]$, $w_j = 1$, then $f_0$ is \emph{not} concise.
  This can be seen as follows.

  Let $w_j=1$, and let $v$ denote the unique vertex in $V^j$.  Let $e$
  be the edge $e=(1,1,j)$.  If $j < d$, let $e'$ be the edge
  $e'=(1,1,j+1)$, otherwise let $e'$ be the edge $e'=(1,1,j-1)$. Both
  $e,e'$ are parity preserving edges. By construction, every valid path
  using $e'$ must also use $e$. Hence the corresponding row in the
  matrix $M_{f_0}^{j+1}$ if $j< d$, and in $M_{f_0}^{j-1}$ otherwise,
  is zero.  Therefore $f_0$ is not concise.
  
  This is an interesting observation, because this is
  the point where our proof fails for single-(source,sink) ABPs,
  and this is expected, because Nisan~\cite{nisan1991lower} had shown
  that the set of
  polynomials computed by such ABPs of format $w$ is a closed set.
\end{remark}

\begin{lemma}
  \label{lem:non-concise-orbit}
  Let $f \in (\End \setminus \GG) \IMM$. Then $f$ is not concise. 
\end{lemma}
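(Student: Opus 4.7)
The approach is to track how the flattening matrices $M_f^j$ transform under the action of $\End$. Since $\End \setminus \GG$ consists of tuples $(A_1,\ldots,A_d)$ in which at least one matrix $A_j$ is singular, it should suffice to show that $A_j$ being singular forces the $j$-th flattening $M_f^j$ to have rank strictly less than $m_j = w_j w_{j+1}$, which directly violates conciseness.

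First, I would work out the equivariance formula for flattenings. Writing out the action in the standard basis, for $g = (A_1,\ldots,A_d)\cdot \IMM$ we have $g_{i_1,\ldots,i_d} = \sum_{j_1,\ldots,j_d} \prod_\ell (A_\ell)_{i_\ell j_\ell}\, (\IMM)_{j_1,\ldots,j_d}$. Rearranging the summation so that the $k$-th tensor factor is separated from the rest yields the identity
\[
M_g^k \;=\; A_k \, M_{\IMM}^k \, \bigl(\textstyle\bigotimes_{\ell \neq k} A_\ell\bigr)^{\!T},
\]
where the tensor product is indexed in the natural way on the column labels of the flattening. This is the standard multilinear-algebra transformation law for the mode-$k$ unfolding of a tensor, and I would state it as a short lemma (or as a one-line unpacking of the definitions) before applying it.

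From this identity the conclusion is immediate. If $f = (A_1,\ldots,A_d)\cdot \IMM$ with $(A_1,\ldots,A_d)\in \End\setminus\GG$, pick any index $k$ with $A_k$ singular, so $\rk(A_k)< m_k$. Then
\[
\rk(M_f^k) \;\leq\; \rk(A_k) \;<\; m_k,
\]
so the $k$-th flattening fails to have full row rank, and by Definition~\ref{def:concise} $f$ is not concise.

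There is no substantial obstacle here; the only thing to be careful about is the indexing on the column side of the flattening, since $W$ is written as $\IC^{w_1 w_2}\otimes \cdots \otimes \IC^{w_d w_{d+1}}$ and one must match the order of the $A_\ell$ in $\bigotimes_{\ell \neq k} A_\ell$ to the fixed ordering of the standard basis used to build $M_f^k$. Once that convention is in place, the rank inequality $\rk(AMB^T)\leq \rk(A)$ closes the argument. Combining this lemma with Lemma~\ref{lem:f0-concise} then yields $f_0 \notin (\End\setminus\GG)\IMM$, as required by the first of the two bullets set up in Section~\ref{sec:f0-construction}.
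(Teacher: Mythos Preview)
Your proposal is correct and follows essentially the same approach as the paper: both identify the mode-$k$ unfolding transformation law and use that a singular $A_k$ forces $\rk(M_f^k)<m_k$. Your formula $M_g^k = A_k\, M_{\IMM}^k\,(\bigotimes_{\ell\neq k} A_\ell)^T$ is in fact slightly more explicit than the paper's version, which records only the crucial left factor $M_{g\IMM}^i = g_i M_{\IMM}^i$; either way the rank bound $\rk(M_f^k)\leq \rk(A_k)$ is immediate.
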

\begin{proof}
This statement is true in very high generality. In our specific case a proof goes as follows.
If $f \in (\End \setminus \GG) \IMM$, then $f = g \IMM$ for some $g \in \End\setminus\GG$.
Let $g = (g_1,\ldots,g_d)$, where $g_i \in \IC^{w_i w_{i+1}\times w_i w_{i+1}}$.
Since $g \notin \GG$, at least one of the $g_i$ must be singular.
The crucial property is $M_{g\IMM}^i = g_i M_{\IMM}^i$,
which finishes the proof.
\end{proof}

\section{Orbit dimension, tangent spaces, and flows}
\label{sec:orbitdimension}

In this section we introduce tangent spaces and study their dimensions. We especially study them in the context of $\GG \IMM$, and $\GG f_0$.

The \emph{orbit dimension} of a tensor
$f \in \IC^{w_1 w_2} \otimes \IC^{w_2 w_3} \otimes \cdots \otimes \IC^{w_d w_{d+1}}$
is the dimension of the orbit $\GG f$ as an affine variety. It can be determined
as the dimension of the tangent space $T_f$ of the action of $\GG$ at $f$,
which is a vector space defined as follows.
Let $\ag := \IC^{w_1 w_2 \times w_1 w_2} \times \cdots \times \IC^{w_d w_{d+1} \times w_d w_{d+1}}$.
For $A \in \ag$ we define the \emph{Lie algebra action} $A f := \lim_{\varepsilon\to 0}\tfrac 1 \varepsilon\left((\textup{id}+\varepsilon A)f-f\right)$, where $\textup{id} \in \GG$ is the identity element.
We define the vector space
\[
T_f := \ag f = \{ A f \mid A \in \ag \}.
\]
\begin{claim}\label{cla:tangentspacedimensionscoincide}
The dimension $\dim T_{h}$ is
the same for all ${h} \in \GG f$.
\end{claim}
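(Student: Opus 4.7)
The plan is to show the stronger statement $T_{gf} = g \cdot T_f$ for every $g \in \GG$, from which the claim follows immediately because $g$ acts as an invertible linear map on $W$, hence preserves dimension of linear subspaces.

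First I would fix $h \in \GG f$ and write $h = gf$ for some $g = (g_1,\ldots,g_d) \in \GG$. The central observation is that the Lie algebra $\ag = \IC^{w_1 w_2 \times w_1 w_2} \times \cdots \times \IC^{w_d w_{d+1} \times w_d w_{d+1}}$ consists of \emph{all} endomorphism tuples, and in particular is stable under componentwise conjugation by $\GG$: the map $\textup{Ad}_{g^{-1}}\colon A \mapsto (g_1^{-1} A_1 g_1, \ldots, g_d^{-1} A_d g_d)$ is a linear automorphism of $\ag$ for every $g \in \GG$.

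Next I would verify the intertwining identity
\[
A (gf) \;=\; g \cdot \bigl( \textup{Ad}_{g^{-1}}(A) \cdot f \bigr) \qquad \text{for all } A \in \ag.
\]
On pure tensors $f = v_1 \otimes \cdots \otimes v_d$ this is a routine calculation using the Leibniz rule for the Lie algebra action:
\[
A(gf) \;=\; \sum_{i=1}^{d} g_1 v_1 \otimes \cdots \otimes A_i g_i v_i \otimes \cdots \otimes g_d v_d
\;=\; \sum_{i=1}^{d} g_1 v_1 \otimes \cdots \otimes g_i (g_i^{-1} A_i g_i) v_i \otimes \cdots \otimes g_d v_d,
\]
which is exactly $g$ applied to $\textup{Ad}_{g^{-1}}(A) \cdot f$; by linearity the identity extends to all of $W$. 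Alternatively one can derive the same identity directly from the definition $Af = \lim_{\varepsilon \to 0}\tfrac{1}{\varepsilon}((\textup{id}+\varepsilon A)f - f)$ by observing $(\textup{id}+\varepsilon A)g = g(\textup{id}+\varepsilon\, \textup{Ad}_{g^{-1}}(A))$ in each factor.

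From the intertwining identity, as $A$ ranges over $\ag$, so does $A' := \textup{Ad}_{g^{-1}}(A)$, and therefore
\[
T_{gf} \;=\; \ag \cdot (gf) \;=\; g \cdot (\ag \cdot f) \;=\; g \cdot T_f.
\]
Since the linear map $g\colon W \to W$ is invertible, $\dim T_{gf} = \dim T_f$, proving the claim. No step is really an obstacle here; the only thing to be careful about is ensuring that $\ag$ is closed under $\textup{Ad}_{g^{-1}}$, which is immediate in our setting because each factor of $\ag$ is a full matrix algebra.
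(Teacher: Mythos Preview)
Your proof is correct and follows essentially the same approach as the paper: both establish the identity $T_{gf} = g\,T_f$ via the intertwining relation $A(gf) = g\bigl((g^{-1}Ag)f\bigr)$ and the fact that conjugation by $g$ is a bijection on~$\ag$. Your additional verification through the Leibniz rule on pure tensors is a harmless elaboration; the paper derives the same identity directly from the limit definition (which you also mention as an alternative).
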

\begin{proof}
Since the action of $\GG$ is linear, for all $g \in \GG$ and $A \in \ag$ we have
\begin{eqnarray*}
A(gf) &=& \lim_{\varepsilon\to 0}\tfrac 1 \varepsilon\left((\textup{id}+\varepsilon A)(gf)-gf\right) = \lim_{\varepsilon\to 0}\tfrac 1 \varepsilon\left(g g^{-1}(\textup{id}+\varepsilon A)g f-gf\right) 
\\ &=& g \lim_{\varepsilon\to 0}\tfrac 1 \varepsilon\left((\textup{id}+\varepsilon (g^{-1}Ag)) f-f\right) = g((g^{-1}Ag)f)
\end{eqnarray*}
Since $A \mapsto g^{-1}Ag$ is a bijection on $\ag$, it follows that $T_{gf} = g T_f$.
Hence the claim follows.
\end{proof}
In the following we will use Claim~\ref{cla:tangentspacedimensionscoincide} to argue $f_0 \notin \GG\IMM$ by showing
that $\dim T_{\IMM}$ and $\dim T_{f_0}$ are different.

Let $e,e' \in E^i$ and let $A_{e,e'}^{(i)} \in \ag$ denote the matrix tuple where the $i$-th matrix has a 1
at position $(e,e')$ and all other entries (also in all other matrices) are 0.
Since these matrices form a basis of $\ag$, it follows that
\[
\ag f = \textup{linspan}\{ A_{e,e'}^{(i)} f \}.
\]
For a tensor $f$ we define the \emph{support} of $f$ as the set of monomials (i.e., standard basis tensors) for which $f$ has nonzero coefficient.
For a linear subspace $V \subseteq \IC^{w_1 w_2} \otimes \IC^{w_2 w_3} \otimes \cdots \otimes \IC^{w_d w_{d+1}}$ we define the \emph{support} of $V$ as the union of the supports of all $f \in V$.

We write $e \cap e' = \emptyset$ to indicate that two edges $e$ and $e'$ do not share any vertex.
We write $|e \cap e'| = 1$ if they share exactly one vertex.
We observe that for $f \in \{\IMM,f_0\}$
the vector space $T_f$ decomposes into a direct sum of three vector spaces,
\begin{eqnarray*}
\ag_2 &:=& \textup{linspan}\{ A_{e,e'}^{(i)} \mid 1 \leq i \leq d, 1 \leq e, e' \leq w_i w_{i+1}, e \cap e'=\emptyset \} \\
\ag_1 &:=& \textup{linspan}\{ A_{e,e'}^{(i)} \mid 1 \leq i \leq d, 1 \leq e, e' \leq w_i w_{i+1}, |e \cap e'|=1 \} \\
\ag_0 &:=& \textup{linspan}\{ A_{e,e}^{(i)} \mid 1 \leq i \leq d, 1 \leq e \leq w_i w_{i+1} \}.\\
\ag &=& \ag_0 \oplus \ag_1 \oplus \ag_2\\
T_f &=& \ag_0 f \oplus \ag_1 f \oplus \ag_2 f
\end{eqnarray*}
The last direct sum decomposition follows from the fact that $\ag_0 f$, $\ag_1 f$, and $\ag_2 f$ have pairwise disjoint supports.

We show in this section that $\dim \ag_2 \IMM = \dim \ag_2 f_0$, and that $\dim \ag_1 \IMM = \dim \ag_1 f_0$.
In Section~\ref{sec:flows} we show that
$\dim \ag_0 \IMM > \dim \ag_0 f_0$, which then implies $f_0 \notin \GG \IMM$ by Claim~\ref{cla:tangentspacedimensionscoincide}.
In fact, Theorem~\ref{thm:dimT} gives the exact dimension of $\ag_0 \IMM$ by proving that $\ag_0 \IMM$ is isomorphic to the vector space of flows on the ABP digraph when identifying vertices in $V^1$ with their corresponding vertices in $V^{d+1}$. Theorem~\ref{thm:dimTprime} establishes an additional equation based on the vertex parities that shows that $\ag_0 f_0$ is strictly lower dimensional than $\ag_0 \IMM$.

We start with Lemma~\ref{lem:left-summand-full-dim}, which shows that $\dim \ag_2 \IMM$ and $\dim \ag_2 f_0$ have \emph{full} dimension.
\begin{lemma}
  \label{lem:left-summand-full-dim}
  Let $f \in \{\IMM,f_0\}$. The space $\ag_2 f$ has full dimension. That is, its dimension equals \(\sum_{i=1}^d w_iw_{i+1}(w_i-1)(w_{i+1}-1)\).
\end{lemma}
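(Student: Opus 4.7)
The plan is to establish the claim by proving that the linear map $\ag_2 \to W$ given by $A \mapsto Af$ is injective for $f \in \{\IMM, f_0\}$. Since $\dim \ag_2 = \sum_{i=1}^d w_i w_{i+1}(w_i-1)(w_{i+1}-1)$ (in each layer $i$, the ordered pairs $(e,e')$ of edges of $E^i$ with $e \cap e' = \emptyset$ are in bijection with ordered pairs of distinct rows and distinct columns of the $w_i \times w_{i+1}$ grid, of which there are exactly $w_iw_{i+1}(w_i-1)(w_{i+1}-1)$), injectivity immediately yields the asserted dimension. I would in fact prove the stronger statement that the tensors $A_{e,e'}^{(i)} f$, as $(i,e,e')$ ranges over the basis index set of $\ag_2$, have pairwise disjoint and individually nonempty supports.

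For support disjointness, I would start from the observation that the coefficient of a monomial $(j_1,\ldots,j_d)$ in $A_{e,e'}^{(i)} f$ equals the coefficient in $f$ of the substituted tuple $(j_1,\ldots,j_{i-1},e',j_{i+1},\ldots,j_d)$ when $j_i = e$, and is $0$ otherwise. Hence the monomial lies in this support iff $j_i = e$ and the substituted tuple is a valid path (with, in case $f = f_0$, exactly one parity-preserving edge, as in the proof of Lemma~\ref{lem:f0-concise}). Because $e \cap e' = \emptyset$, the substitution produces detectable ``breaks'' near position $i$ that are intrinsic to the monomial: for interior $1 < i < d$ both adjacencies $\text{end}(j_{i-1}) = \text{start}(j_i)$ and $\text{end}(j_i) = \text{start}(j_{i+1})$ fail while every other adjacency and the correspondence $\corr(\text{start}(j_1)) = \text{end}(j_d)$ hold, whereas for $i \in \{1, d\}$ a single interior adjacency fails and the correspondence fails instead. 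These patterns allow unique recovery of $i$ from the monomial alone, after which $e = j_i$ is immediate and $e'$ is read off from the neighboring endpoints (or from $\corr$ at the boundary). The argument is simultaneously valid for $\IMM$ and $f_0$.

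For nonemptiness of each support, when $f = \IMM$ any valid path through $e'$ at position $i$ yields a monomial in the support, and such paths exist by the complete bipartite structure of $\Gamma_{\textup{com}}$. When $f = f_0$ I would rerun the case analysis from the proof of Lemma~\ref{lem:f0-concise}, splitting on whether $e'$ is parity-preserving or parity-changing and on the parities of $i-1$ and $d-i$ (which coincide since $d$ is odd), and in each case assembling a valid path through $e'$ at position $i$ with exactly one parity-preserving edge. The hypothesis $w_i \geq 2$ at every layer is exactly what ensures that both vertex parities, and thus parity-changing edges, are available throughout.

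The main obstacle I anticipate is the bookkeeping at the boundary positions $i \in \{1, d\}$ in the support-disjointness step: the ``break'' there is a hybrid of one interior adjacency mismatch with a $\corr$-based correspondence failure, and one must verify that this pattern cannot be confused either with an interior break (two adjacent interior mismatches) or with the opposite boundary case. Once this verification is in place, the remaining parity counts for the nonemptiness step reduce cleanly to the arguments already developed in Lemma~\ref{lem:f0-concise} and Proposition~\ref{prop:fcom-concise}.
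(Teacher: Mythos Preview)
Your proposal is correct and follows essentially the same approach as the paper: both argue that the tensors $A_{e,e'}^{(i)} f$ have pairwise disjoint (and nonempty) supports by showing that from any monomial in such a support one can uniquely reconstruct the triple $(i,e,e')$, the key point being that the edge inserted at position $i$ shares no vertex with its neighbours since $e\cap e'=\emptyset$. Your treatment is in fact somewhat more careful than the paper's, which handles only $\IMM$ explicitly, glosses over the boundary positions $i\in\{1,d\}$, and leaves the nonemptiness (i.e.\ the existence of a valid path through $e'$, with exactly one parity-preserving edge in the $f_0$ case) implicit via the earlier conciseness arguments; your invocation of Proposition~\ref{prop:fcom-concise} and Lemma~\ref{lem:f0-concise} for this step is exactly right.
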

\begin{proof}
  Suppose $f = \IMM$. The other case being analogous, we only argue this case.
  
  We analyze the monomials that appear in the different $A_{e,e'}^{(i)}\IMM$ and argue that a monomial that appears in some $A_{e,e'}^{(i)}\IMM$
  can only appear in that specific $A_{e,e'}^{(i)}\IMM$.
  Indeed, each monomial corresponds to a valid path in which one edge $e$ in layer $i$ is changed to $e'$.
  Since $e$ and $e'$ share no vertex, from this edge sequence we can reconstruct $i$, $e$, and $e'$ uniquely:
  $e'$ is the edge that does not have any vertex in common with the rest of the edge sequence,
  $i$ is its layer, and $e$ is the unique edge that we can replace $e'$ by in order to form a valid path.
  We conclude that the $A_{e,e'}^{(i)}\IMM$ have disjoint support and the lemma follows.
  \end{proof}

To establish that $\dim \ag_1 \IMM = \dim \ag_1 f_0$, we introduce some notation.

For a connected directed graph $G=(V,E)$ we define a \emph{flow} to be a labeling of the edge set $E$ by complex numbers such that at every vertex the sum of the labels of the incoming edges equals the sum of the labels of the outgoing edges. It is easily seen that the set of flows forms a vector space $F$.
We have
\begin{equation}\label{eq:flow}
\dim F = |E|-|V|+1,
\end{equation}
see e.g.\ Theorem~20.7 in~\cite{BM-book08}.

Recall that $E^i$ denotes the set of edges from $V^i$ to $V^{i+1}$. 
Let $\mathscr X := E^1 \times \cdots \times E^d$ denote the direct product of the sets of edge lists.
Each directed path of length $d$ from layer $1$ to $d+1$ is an element of $\mathscr X$, but $\mathscr X$ contains other edge sets as well.
Define $E_i := \IC^{E^i}$. Consider the following map $\varphi$ from
$\mathscr{X}$ to $E_1 \otimes \cdots \otimes E_d$, 
\[
\varphi (e_1,\ldots,e_d) = x_{e_1} \otimes \cdots \otimes x_{e_d} \in E_1 \otimes \cdots \otimes E_d
\]
where $(x_j)$ is the standard basis of $E_i$. Note $\varphi$ is a bijection between $\mathscr X$ and the standard basis of $E_1 \otimes \cdots \otimes E_d$.

An edge set in $\mathscr X$ is called a \emph{valid path} if it forms a path that starts and ends at corresponding vertices (see Sec.~\ref{sec:intro}).
Let $\mathscr P \subseteq \mathscr X$ denote the set of valid paths.

\begin{proposition}\label{pro:gonefcom}
$\dim \ag_1 \IMM = \dim \ag_1 f_0 = \sum_{i=1}^d (w_{i-1} + w_{i+1} -1)(w_i-1)w_i$, where $w_0 := w_d$.
\end{proposition}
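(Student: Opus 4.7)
My plan is to decompose $\ag_1\IMM$ and $\ag_1 f_0$ according to the cyclic position of the unique \emph{break} in each generator's support and, block by block, identify the generators with fibre-sum operators on a tensor grid whose rank equals $w_i+w_{i+2}-1$. Summing the blockwise contributions and re-indexing $j=i+1$ under the cyclic convention $w_0:=w_d$, $w_{d+1}:=w_1$ should then give the stated formula.

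I would first observe that every generator $A_{e,e'}^{(j)}f$ with $|e\cap e'|=1$ is supported on edge sequences differing from a valid path in exactly one position, and therefore carries a unique break at a cyclic position $i\in[d]$: $i=j$ if $e,e'$ share their source and $i=j-1\pmod d$ if they share their target. This already gives $\ag_1 f=\bigoplus_{i=1}^{d}V_i(f)$ by disjoint monomial supports, and each $V_i(f)$ splits further as $\bigoplus_{(\beta,\alpha')}V_i^{(\beta,\alpha')}(f)$ over pairs $(\beta,\alpha')\in[w_{i+1}]^2$ with $\beta\ne\alpha'$ recording the vertices at layer $i+1$ produced by the break. In each block the contributing generators are the $w_i$ source-share operators $S(a):=A_{(a,\beta,i),(a,\alpha',i)}^{(i)}$ and the $w_{i+2}$ target-share operators $T(r):=A_{(\alpha',r,i+1),(\beta,r,i+1)}^{(i+1)}$. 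For $f=\IMM$ the vectors $S(a)\IMM,T(r)\IMM$ are fibre sums on the grid $\bigotimes_{j\ne i+1}\IC^{w_j}$ indexing the block's monomials (the indicators of $u_i=a$ and $u_{i+2}=r$ respectively), with unique linear relation $\sum_a S(a)\IMM=\sum_r T(r)\IMM$; this yields dimension $w_i+w_{i+2}-1$ per block, and summing over the $w_{i+1}(w_{i+1}-1)$ blocks and over $i$ finishes the $\IMM$ half.

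For $f_0$ I would first get the upper bound $\dim\ag_1 f_0\le\dim\ag_1\IMM$ from upper-semicontinuity of matrix rank, applied to the family $f_\varepsilon=\varepsilon^{-1}g_\varepsilon\IMM$, where $g_\varepsilon\in\End$ is the diagonal element scaling parity-preserving variables by $\varepsilon$ and fixing parity-changing ones: because $\ag_1$ is stable under diagonal conjugation, $\ag_1 f_\varepsilon=\varepsilon^{-1}g_\varepsilon\ag_1\IMM$ has constant dimension $\dim\ag_1\IMM$ for every $\varepsilon\ne 0$, and the rank at $\varepsilon=0$ can only drop. For the matching lower bound I would rerun the block analysis on $f_0$. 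When $\beta\equiv\alpha'\pmod 2$, all $S(a)f_0,T(r)f_0$ land in $W^{(1)}$ and realise a fibre-sum arrangement on the sub-grid of 1-PP spines, with the same relation $\sum_a S(a)f_0=\sum_r T(r)f_0$ and rank $w_i+w_{i+2}-1$. When $\beta\not\equiv\alpha'$ the operators split by the parity of their parameter into a $W^{(0)}$-group (those $S(a)$ with $a\equiv\alpha'$ and $T(r)$ with $r\equiv\beta$) and a $W^{(2)}$-group (the others), supported on disjoint monomial sets so the ranks add. Since $d$ is odd the 0-PP spines in the block have their entire parity pattern pinned down, so the $W^{(0)}$-group is again a fibre-sum arrangement of rank $w_i^{\alpha'}+w_{i+2}^{\beta}-1$, where $w_j^{\beta}$ and $w_j^{\alpha'}$ count the $j$-layer indices of the indicated parity. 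The $W^{(2)}$-group should admit no further relation, since each of its $w_i^{\beta}+w_{i+2}^{\alpha'}$ operators hits a 2-PP spine that lies outside the support of every other operator in the group, giving full rank $w_i^{\beta}+w_{i+2}^{\alpha'}$. Adding the two contributions yields $w_i+w_{i+2}-1$ as required.

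The main technical obstacle is this $W^{(2)}$-independence claim in the $\beta\not\equiv\alpha'$ case: I have to exhibit, for each operator in the $W^{(2)}$-group, a 2-PP \emph{pivot} spine that no other operator in the group contains. Constructing such pivots uses $w_j\ge 2$ at every layer together with the oddness of $d$---the same ingredients underlying Remark~\ref{rem:proof-fails}, where it is observed that the $w_j=1$ case already obstructs conciseness---and this is where the combinatorial bulk of the proof sits.
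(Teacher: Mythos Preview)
Your block decomposition by break position and pair $(\beta,\alpha')$ is precisely the paper's decomposition $\ag_1\IMM=\bigoplus_{a\neq b,\,i}T_{a,b,i}$, shifted by one layer index; your fibre-sum rank computation for $\IMM$ is equivalent to the paper's two-vertex-multigraph flow argument (your unique relation $\sum_a S(a)\IMM=\sum_r T(r)\IMM$ is exactly \eqref{eq:equality}). For $f_0$ the paper simply says ``analogous'' without further detail, so your semicontinuity upper bound together with the explicit parity case analysis actually goes beyond what the paper writes out. Your $\beta\not\equiv\alpha'$ analysis is correct, and the pivot construction for the $W^{(2)}$-group works as you describe.

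There is, however, a real gap in your $\beta\equiv\alpha'$ case. The relation $\sum_a S(a)f_0=\sum_r T(r)f_0$ only gives $\dim\le w_i+w_{i+2}-1$; to conclude equality you need this to be the \emph{only} relation among the $S(a)f_0,T(r)f_0$, and your phrase ``fibre-sum arrangement on the sub-grid of $1$-PP spines'' does not deliver that. The sub-grid is genuinely smaller than $[w_i]\times[w_{i+2}]$: when $a\equiv r\equiv\beta\pmod 2$, both fixed edges $(a,\beta)$ and $(\alpha',r)$ are parity preserving, so \emph{no} $1$-PP spine has $(u_i,u_{i+2})=(a,r)$. The rank is nevertheless $w_i+w_{i+2}-1$, but for the right reason: the bipartite graph on $[w_i]\sqcup[w_{i+2}]$ recording which $(a,r)$ pairs are realised remains connected, because each of the three remaining parity types is realised (this again uses $w_j\ge 2$ at every layer and $d$ odd, just as in your $W^{(2)}$ pivots). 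You should supply this connectivity check explicitly.
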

\begin{proof}
The proof works almost analogously for $\IMM$ and $f_0$, so we treat only the more natural case~$\IMM$.
We show that $\ag_1 \IMM$ is isomorphic to a direct sum of vector spaces of flows on very simple digraphs.
Fix $1 \leq i \leq d$. Fix distinct $1 \leq a,b \leq w_i$.
For distinct edges $e,e' \in E^i$, let $\mathscr P_{e,e'} \subseteq \mathscr X$ be the set of edge sets containing $e'$ that are not valid paths, but that become valid paths by removing $e'$ and adding $e$.
Let $\mathscr P_{a,b}^i \subseteq \mathscr X$ be the set of edge sets that are not valid paths, but that become valid paths by switching the end point of the $(i-1)$-th edge to $v^i_b$ and that also become valid paths by switching the start point of the $i$-th edge to $v^i_a$ (if $i-1=0$, then interpret $i-1:=d$). Pictorially, this means that elements in $\mathscr P_{a,b}^i$ are almost valid paths, but there is a discontinuity at layer $i$, where the path jumps from vertex $v^i_a$ to vertex $v^i_b$.
We have
\[
A^{(i)}_{e,e'}\IMM = \sum_{p \in \mathscr P_{e,e'}}\varphi(p).
\]
The vectors $\{A^{(i)}_{e,e'}\IMM \mid 1\leq i \leq d, e,e' \in E^i, |e\cap e'|=1\}$ are not linearly independent, because for $a \neq b$ we have
\begin{equation}\label{eq:equality}
\sum_{\substack{
\text{$e$ and $e'$ have the same start point}\\
e' \text{ ends at the $a$-th vertex}\\
e \text{ ends at the $b$-th vertex}
}}
A^{(i-1)}_{e,e'}\IMM
=
\sum_{p \in \mathscr P^i_{a,b}}\varphi(p)
=
\sum_{\substack{
\text{$h$ and $h'$ have the same end point}\\
h \text{ starts at the $a$-th vertex}\\
h' \text{ starts at the $b$-th vertex}
}}
A^{(i)}_{h,h'}\IMM.
\end{equation}
Define
\begin{eqnarray*}
T_{a,b,i} &:=& \textup{linspan}\bigg\{A^{(i-1)}_{e,e'}\IMM \ \bigg\lvert \ \substack{
\text{$e$ and $e'$ have the same start point}\\
e' \text{ ends at the $a$-th vertex}\\
e \text{ ends at the $b$-th vertex}
}\bigg\} \\
&+& \textup{linspan}\bigg\{A^{(i)}_{h,h'}\IMM \ \bigg\lvert \  \substack{
\text{$h$ and $h'$ have the same end point}\\
h \text{ starts at the $a$-th vertex}\\
h' \text{ starts at the $b$-th vertex}
}\bigg\}.
\end{eqnarray*}
The support of $T_{a,b,i}$ and $T_{\tilde a,\tilde b,\tilde i}$ are disjoint, provided $(a,b,i) \neq (\tilde a, \tilde b, \tilde i)$.
Hence
\[
\ag_1 \IMM = \bigoplus_{\substack{1 \leq i \leq d\\1 \leq a,b \leq w_i\\a \neq b}} T_{a,b,i}
\]
It remains to prove that the dimension of $T_{a,b,i}$ is $w_{i-1}+w_{i+1}-1$, because then
\[
\dim \ag_1 \IMM = \sum_{\substack{1 \leq i \leq d\\1 \leq a,b \leq w_i\\a \neq b}} (w_{i-1} + w_{i+1} -1) =
\sum_{i=1}^d (w_{i-1} + w_{i+1} -1)(w_i-1)w_i.
\]
Note that $T_{a,b,i}$ is defined as the linear span of $w_{i-1}+w_{i+1}$ many vectors, but \eqref{eq:equality} shows that these are not linearly independent.
We prove that \eqref{eq:equality} is the only equality by showing that $T_{a,b,i}$ is isomorphic to a flow vector space.
We define a multigraph with two vertices: $\vi$ and $\vstar$. We have $w_{i+1}$ many edges from $\vi$ to $\vstar$, and we have $w_{i-1}$ many edges from $\vstar$ to $\vi$.
We denote by $\vstar\stackrel{k}{\to}\vi$ the $k$-th edge from $\vstar$ to $\vi$.
Let $F_{a,b,i}$ denote the vector space of flows on this graph. Its dimension is $w_{i-1}+w_{i+1}-1$, see \eqref{eq:flow}.
We define $\varrho : E^1 \otimes \cdots \otimes E^d \to F_{a,b,i}$ on rank 1 tensors via
\begin{eqnarray*}
\varrho(x_{e_1}\otimes \cdots \otimes x_{e_d})(\vstar\stackrel{k}{\to}\vi) &=& \begin{cases}
1 & \text{ if $e_{i-1}$ starts at $k$ in layer $i-1$ and ends at $a$ in layer $i$,} \\
0 & \text{ otherwise}.
\end{cases} \\
\varrho(x_{e_1}\otimes \cdots \otimes x_{e_d})(\vi\stackrel{l}{\to}\vstar) &=& \begin{cases}
1 & \text{ if $e_{i}$ starts at $b$ in layer $i$ and ends at $l$ in layer $i+1$,} \\
0 & \text{ otherwise}.
\end{cases}
\end{eqnarray*}
Using \eqref{eq:equality} it is readily verified that $\varrho$ maps $T_{a,b,i}$ to $F_{a,b,i}$. It remains to show that $\varrho : T_{a,b,i} \to F_{a,b,i}$ is surjective.
Let $\alpha := |\mathscr P_{a,b}^i|$.
We observe that
\begin{eqnarray*}
\varrho(A_{e,e'}^{(i-1)}\IMM)(\vstar\stackrel{k}{\to}\vi) &=&
\begin{cases}
\alpha/w_{i-1} & \text{if $e$ and $e'$ both start at the $k$-th vertex} \\
0 & \text{if $e$ and $e'$ both start at the same vertex, but not at the $k$-th}
\end{cases} \\
\varrho(A_{e,e'}^{(i-1)}\IMM)(\vi\stackrel{l}{\to}\vstar) &=& \alpha/(w_{i-1}w_{i+1})\\
\varrho(A_{h,h'}^{(i)}\IMM)(\vi\stackrel{l}{\to}\vstar) &=&
\begin{cases}
\alpha/w_{i+1} & \text{if $h$ and $h'$ both end at the $l$-th vertex} \\
0 & \text{if $h$ and $h'$ both end at the same vertex, but not at the $l$-th}
\end{cases} \\
\varrho(A_{h,h'}^{(i)}\IMM)(\vstar\stackrel{k}{\to}\vi) &=& \alpha/(w_{i-1}w_{i+1})
\end{eqnarray*}
Let $\Xi := \sum A_{e,e'}^{(i-1)}\IMM$.
Then $\forall k: \varrho(\Xi)(\vstar\stackrel{k}{\to}\vi)=\alpha/w_{i-1}$ and
$\forall l: \varrho(\Xi)(\vi\stackrel{l}{\to}\vstar)=\alpha$.
Therefore, for $e,e'$ starting at the $k_0$-th vertex and $h,h'$ ending at the $l_0$-th vertex we have that
\[
\varrho\bigg(w_{i-1}w_{i+1}\varrho(A_{e,e'}^{(i-1)}\IMM) + w_{i-1}w_{i+1}\varrho(A_{h,h'}^{i}\IMM) - \Xi\bigg)
\]
is nonzero only on exactly two edges: $\vstar\stackrel{k_0}{\to}\vi$ and $\vi\stackrel{l_0}{\to}\vstar$. Cycles form a generating set of the vector space $F_{a,b,i}$, which finishes the proof of the surjectivity of $\varrho$.
\end{proof}

\section{Flows on ABPs}\label{sec:flows}
We now proceed to the analysis of $\ag_0 \IMM$ and $\ag_0 f_0$. The connection to flow vector spaces will be even more prevalent than in Proposition~\ref{pro:gonefcom}.
The main result of this section is $\dim \ag_0 \IMM > \dim \ag_0 f_0$ (Theorems~\ref{thm:dimT} and \ref{thm:dimTprime}), which implies that $\IMM$ and $f_0$ have different orbit dimensions. We thereby conclude that $f_0 \notin \GG \IMM$.

To each edge $e$ we assign its \emph{path tensor} $\psi(e)$ by summing tensors over all valid paths passing through $e$,
\[
\psi(e) := \sum_{p \in \mathscr P \text{ with } e \in p} \varphi(p) \in E_1 \otimes \cdots \otimes E_d.
\]
By linear continuation this gives a linear map
\(
\psi : \IC^E \to E_1 \otimes \cdots \otimes E_d.
\)
%We calculate
Observe that 
\(
\psi(e) = A_{e,e}^{(i)} \IMM.
\)
Let $\mathscr T$ denote the linear span of all $\psi(e)$, $e \in E$.
In other words, $\mathscr T = \ag_0 \IMM$.

Let $\mathscr P' \subseteq \mathscr P \subseteq \mathscr X$ be the set of valid paths that contain exactly one parity preserving edge.  
To each edge $e$ we assign its \emph{parity path tensor} $\psi'(e)$ by summing tensors over paths in $\mathscr{P'}$, 
\[
\psi'(e) := \sum_{p \in \mathscr P' \text{ with } e \in p} \varphi(p) \in E_1 \otimes \cdots \otimes E_d.
\]
By linear continuation this gives a linear map
\(
\psi' : \IC^E \to E_1 \otimes \cdots \otimes E_d.
\)
Observe that 
\(
\psi'(e) = A_{e,e}^{(i)} f_0.
\)
Let $\mathscr T'$ denote the linear span of all $\psi'(e)$, $e \in E$.
In other words, $\mathscr T' = \ag_0 f_0$.

We will establish the following bounds on the dimensions of $\mathscr T$ and $\mathscr T'$.

\begin{theorem}\label{thm:dimT}
$\dim \mathscr T = |E|-\sum_{i=1}^dw_i +  1$.
\end{theorem}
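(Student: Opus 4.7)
The plan is to realize $\psi$ as a path-incidence map and reduce the dimension computation to a standard coboundary calculation on a directed graph. Form the graph $G = (\tilde V, E)$ by identifying each $v \in V^1$ with $\corr(v) \in V^{d+1}$; then $|\tilde V| = \sum_{i=1}^d w_i$, and every valid path becomes a directed cycle of length $d$ in $G$. Because the tensors $\{\varphi(p) : p \in \mathscr P\}$ are distinct standard basis elements and hence linearly independent, $\psi$ factors as $\psi = \Phi \circ M$, where $M : \IC^E \to \IC^{\mathscr P}$ is the edge-path incidence map $(Mc)_p := \sum_{e \in p} c_e$ and $\Phi$ sends the indicator of $p$ to $\varphi(p)$ and is therefore injective. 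Hence $\dim \mathscr T = \rk(M) = |E| - \dim \ker M$, and the theorem reduces to showing $\dim \ker M = \sum_i w_i - 1$.

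Let $B \subseteq \IC^E$ denote the coboundary space $B := \{db : b : \tilde V \to \IC\}$ with $(db)_{(u,v)} := b(v) - b(u)$; since $G$ is connected, $\dim B = |\tilde V| - 1 = \sum_i w_i - 1$. The plan is to prove $\ker M = B$. The inclusion $B \subseteq \ker M$ is immediate by telescoping around any directed cycle. For the reverse, given $c \in \ker M$, construct a potential $b$ with $c = db$ as follows. Fix $v_0 \in V^1$ and set $b(v_0) := 0$. For $v \in V^j$ with $2 \leq j \leq d$, define $b(v) := \sum_{e \in p} c_e$ for any directed path $p$ of length $j-1$ from $v_0$ to $v$; well-definedness follows because any two such paths $p_1, p_2$ can be extended by a common suffix of length $d - j + 1$ back to $v_0$ to produce two valid paths whose $c$-sums vanish and hence coincide.

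The main obstacle is extending $b$ to $V^1 \setminus \{v_0\}$: a directed path of length $d$ from $v_0$ to a distinct $v \in V^1$ is not a valid path, so the closing-up trick no longer applies. I work around this using a \emph{parallel valid paths} construction available precisely because $d \geq 3$ and each layer of $\Gamma_{\textup{com}}$ is complete bipartite. Define $b(v) := c_{(v_0, w)} - c_{(v, w)}$ for any $w \in V^2$. Both the independence of $w$ and the edge identity $c_e = b(\mathrm{head}(e)) - b(\mathrm{tail}(e))$ on the boundary layers $E^1$ and $E^d$ follow by comparing a valid path of shape $(v_0, w, w_2, \ldots, w_{d-1}, v_0)$ with its shifted parallel $(v, w, w_2, \ldots, w_{d-1}, v)$: the shared interior of length $d - 2$ exists exactly because $d \geq 3$ and every layer is complete bipartite, and subtracting the two vanishing $c$-sums produces exactly the required relation. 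A routine case analysis over the edge layers $E^1, \ldots, E^d$ then verifies $c = db$ throughout, so $\ker M = B$ and $\dim \mathscr T = |E| - \sum_i w_i + 1$. Notably, this parallel-paths step breaks when some $w_j = 1$, in agreement with Remark~\ref{rem:proof-fails} and Nisan's closure result for single-(source,sink) ABPs.
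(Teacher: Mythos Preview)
Your proof is correct and takes a genuinely different route from the paper's. Both arguments share the easy half: the inclusion $B \subseteq \ker\psi$ (your telescoping step) is exactly the paper's Lemma~\ref{lem:dimTleqF}, which notes that the flow-conservation vectors $r_v$ lie in $\ker\psi$. The divergence is in the hard direction. The paper does \emph{not} compute $\ker\psi$ directly; instead it builds an auxiliary linear map $\varrho:E_1\otimes\cdots\otimes E_d\to\IC^E$, shows $\varrho(\mathscr T)\subseteq F$, and then proves surjectivity onto the flow space $F$ by (i) exhibiting a specific spanning tree of $\tilde\Gamma$ so that the fundamental cycles can each be written as $\pm1$-combinations of directed length-$d$ cycles (Claim~\ref{cla:flowspan}), and (ii) an explicit iterative ``support-shrinking'' construction realizing every directed length-$d$ cycle inside $\im\varrho|_{\mathscr T}$ (Claim~\ref{cla:construction}). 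Your argument bypasses $\varrho$ entirely: after factoring $\psi=\Phi\circ M$ with $\Phi$ injective, you show $\ker M\subseteq B$ by integrating $c$ along paths from a basepoint to produce a potential, handling the wrap-around layer $V^1$ via the parallel valid-paths trick. This is more elementary and self-contained; the paper's approach, on the other hand, makes the isomorphism with the flow space $F$ explicit and reusable, which feeds directly into the companion argument for $\mathscr T'$.

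One small quibble: your closing sentence that the parallel-paths step ``breaks when some $w_j=1$'' is not quite right and conflates two different things. Remark~\ref{rem:proof-fails} concerns the conciseness of $f_0$, not Theorem~\ref{thm:dimT}. In fact your construction of the potential goes through even if some intermediate $w_j=1$ (the parallel paths still exist and the subtraction still yields independence of~$w$), and if $w_1=1$ the wrap-around step is vacuous. This does not affect the proof, since the ambient hypotheses of Section~\ref{sec:f0-construction} already impose $d\geq 3$ and $w_i\geq 2$.
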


\begin{theorem}\label{thm:dimTprime}
$\dim \mathscr T' \leq |E|- \sum_{i=1}^dw_i $.
\end{theorem}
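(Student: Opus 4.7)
The plan is to prove the equivalent bound $\dim \ker \psi' \geq \sum_{i=1}^d w_i$, from which the theorem follows by rank-nullity applied to $\psi' \colon \IC^E \to E_1 \otimes \cdots \otimes E_d$. Since $\varphi$ is a bijection between $\mathscr{X}$ and the standard basis of $E_1 \otimes \cdots \otimes E_d$, the kernel $\ker \psi'$ is exactly the linear subspace cut out by the equations $\sum_{e \in p} c(e) = 0$ ranging over $p \in \mathscr{P}'$; the analogous description of $\ker \psi$ uses the larger path set $\mathscr{P} \supseteq \mathscr{P}'$, so $\ker \psi \subseteq \ker \psi'$. Theorem~\ref{thm:dimT} already hands me $\sum_i w_i - 1$ independent elements inside $\ker \psi$, so it suffices to produce one additional element of $\ker \psi'$ that does \emph{not} lie in $\ker \psi$.

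The candidate I would choose is the global \emph{parity-type} labeling $c^* \colon E \to \IC$ defined by $c^*(e) = 1$ on every parity-changing edge and $c^*(e) = -(d-1)$ on every parity-preserving edge. By definition of $\mathscr{P}'$, each $p \in \mathscr{P}'$ contains exactly one parity-preserving edge and $d-1$ parity-changing edges, so $\sum_{e \in p} c^*(e) = (d-1)\cdot 1 + 1 \cdot \bigl(-(d-1)\bigr) = 0$. Thus $\psi'(c^*) = 0$. The constant $-(d-1)$ is forced by this single equation and is the only free parameter in the construction, which is why the argument can produce \emph{exactly one} extra dimension, matching the theorem.

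To see $c^* \notin \ker \psi$, I would exhibit a single valid path on which the $c^*$-sum is nonzero: the "diagonal" path $p_0 = v^1_1 \to v^2_1 \to \cdots \to v^{d+1}_1$. This path is valid because $v^{d+1}_1 = \corr(v^1_1)$ and it exists because every layer of $\Gamma_{\textup{com}}$ is complete bipartite. Every edge of $p_0$ joins two vertices of index $1$ and hence of equal parity, so every edge is parity-preserving, giving $\sum_{e \in p_0} c^*(e) = d \cdot \bigl(-(d-1)\bigr) = -d(d-1) \neq 0$ since $d \geq 3$. Therefore $c^* \notin \ker \psi$, and combining this with Theorem~\ref{thm:dimT} yields $\dim \ker \psi' \geq \dim \ker \psi + 1 = \sum_i w_i$, which is the desired bound. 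I do not expect any serious obstacle; the only conceptual step is recognising that the rigid "one-preserving, $(d-1)$-changing" profile shared by every path in $\mathscr{P}'$ admits a \emph{global} (non-coboundary) labeling that annihilates every path sum, which is exactly the extra vanishing separating $\mathscr{T}'$ from $\mathscr{T}$.
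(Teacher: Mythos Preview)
Your proposal is correct and follows essentially the same approach as the paper: both exhibit the global parity-weight vector (your $c^*$ is the negative of the paper's $(d-1)\sum_{e\text{ pres.}}\chi(e)-\sum_{e\text{ chg.}}\chi(e)$) as an element of $\ker\psi'$, and both certify its independence from the flow-type relations using the all-parity-preserving cycle $v^1_1\to v^2_1\to\cdots\to v^d_1\to v^1_1$. Your packaging via the path-sum characterization of $\ker\psi$ and $\ker\psi'$ (yielding $\ker\psi\subseteq\ker\psi'$ directly) is a clean way to invoke Theorem~\ref{thm:dimT} wholesale rather than re-arguing the $|\tilde V|-1$ independent incidence relations, but the substance is the same.
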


The rest of this section is dedicated to the proofs of Theorem~\ref{thm:dimT} and Theorem~\ref{thm:dimTprime} by showing that $\mathscr{T}$ is isomorphic to the vector space of flows ``on the ABP'', while the parity constraints lead to a smaller dimension of $\mathscr{T'}$.

From an ABP $\Gamma$ we construct a digraph $\tilde\Gamma$ by identifying corresponding vertices from the first and the last layer in $V$ and calling the resulting vertex set $\tilde V$.
Note $|\tilde V| = \sum_{i=1}^d w_i$.
The directed graphs $\Gamma$ and $\tilde\Gamma$ have the same edge set.
The resulting directed graph is called $\tilde \Gamma = (\tilde V,E)$.
Let $F$ denote the vector space of flows on $\tilde\Gamma$. Note that by \eqref{eq:flow} we have $\dim F = |E|- |\tilde V| +1$.
All directed cycles in $\tilde \Gamma$ have a length that is a multiple of~$d$. In particular, all cycles of length exactly $d$ are in one-to-one correspondence with valid paths in $\Gamma_{\textup{com}}$.
For an edge $e \in E$, let $\chi(e) \in \IC^E$ denote the characteristic function of $e$,
i.e., the function whose value is 1 on $e$ and 0 everywhere else.

 We now prove Theorem~\ref{thm:dimT} by establishing a matching upper (Lemma~\ref{lem:dimTleqF}) and lower bound (Lemma~\ref{lem:dimTgeqF}) of $|E|- |\tilde V| +1=\dim F$ on $\dim \mathscr{T}$.

%\begin{theorem} %\label{thm:dimT}
%$\dim \mathscr T = |E|-|\tilde V|+1$.
%\end{theorem}

%The rest of this section is concerned with the proof of Theorem~\ref{thm:dimT}.
%We proceed by proving several lemmas.

\subsection*{The upper bound}

\begin{lemma}\label{lem:dimTleqF}
$\dim \mathscr T \leq |E|-|\tilde V|+1$.
\end{lemma}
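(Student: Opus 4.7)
The plan is to view $\psi$ as a linear map $\psi:\IC^E \to E_1 \otimes \cdots \otimes E_d$ sending each basis vector $\chi(e)$ to the path tensor $\psi(e)$. Then $\mathscr T = \im \psi$, so it suffices to exhibit $|\tilde V|-1$ linearly independent elements of $\ker\psi$, because $\dim \mathscr T = |E| - \dim \ker \psi$.

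The candidate kernel elements come from vertex--balance relations in $\tilde\Gamma$. For each vertex $\tilde v \in \tilde V$, let
\[
\delta_{\tilde v} := \sum_{\substack{e \in E\\ e \text{ leaves } \tilde v}} \chi(e) \; - \sum_{\substack{e \in E\\ e \text{ enters } \tilde v}} \chi(e) \; \in \IC^E .
\]
First I would verify $\delta_{\tilde v} \in \ker \psi$. The key point is that every valid path $p \in \mathscr P$ passing through $\tilde v$ uses exactly one outgoing edge at $\tilde v$ and exactly one incoming edge at $\tilde v$. (For an internal layer this is obvious; for an identified boundary vertex, a valid path in $\Gamma$ uses one edge out of $v$ in layer~$1$ and one edge into $\corr(v)$ in layer~$d{+}1$, which after identification become one out-edge and one in-edge at $\tilde v \in \tilde\Gamma$.) Hence
\[
\sum_{e \text{ leaves } \tilde v} \psi(e) \;=\; \sum_{p\in\mathscr P,\, p \ni \tilde v} \varphi(p) \;=\; \sum_{e \text{ enters } \tilde v} \psi(e),
\]
so $\psi(\delta_{\tilde v})=0$.

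Next I would check linear independence up to the obvious dependence. Since every edge $e$ of $\tilde\Gamma$ has exactly one tail and one head in $\tilde V$, each $\chi(e)$ appears exactly once with coefficient $+1$ and exactly once with coefficient $-1$ in $\sum_{\tilde v} \delta_{\tilde v}$, so $\sum_{\tilde v\in \tilde V}\delta_{\tilde v}=0$. I would then argue that this is the \emph{only} linear relation: the family $\{\delta_{\tilde v}\}_{\tilde v \in \tilde V}$ is the standard coboundary family of the digraph $\tilde\Gamma$, and it is a classical fact that for a connected (di)graph its span has dimension $|\tilde V|-1$. The connectedness of $\tilde\Gamma$ is immediate from the structure of $\Gamma_{\textup{com}}$: consecutive layers form complete bipartite graphs, and the boundary identification connects layer $d{+}1$ back to layer $1$.

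The main thing to be careful about is the second step, i.e., ensuring that the candidate relations really span a $(|\tilde V|-1)$-dimensional subspace and that I have not missed any boundary effect coming from the identification of $V^1$ with $V^{d+1}$; the natural way is to pick a spanning tree of $\tilde\Gamma$ and argue that the $\delta_{\tilde v}$ for all but one fixed vertex are linearly independent by looking at leaf edges of the tree. Combining the two steps, $\dim\ker\psi \geq |\tilde V|-1$, which yields $\dim\mathscr T \leq |E| - |\tilde V| + 1$ as desired. The matching lower bound, to be established in Lemma~\ref{lem:dimTgeqF}, will then show this bound is tight and complete the identification $\mathscr T \cong F$.
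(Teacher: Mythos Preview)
Your proposal is correct and essentially identical to the paper's own proof: the paper defines the vectors $r_v = \sum_{e\in\Ein(v)}\chi(e) - \sum_{e\in\Eout(v)}\chi(e)$ (your $\delta_{\tilde v}$ up to sign), observes they lie in $\ker\psi$ by the same in/out balance argument, and cites the standard fact that the rows of the signed incidence matrix of a connected digraph span a space of dimension $|\tilde V|-1$. Your write-up is in fact a bit more explicit about the boundary-identification case and the independence argument, but there is no substantive difference.
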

\begin{proof}
For $v \in \tilde V$, 
let $\Ein(v) \subseteq E$ denote the set of incoming edges incident to $v$ and $\Eout(v) \subseteq E$ denote the set of outgoing edges incident to $v$. 
For each $v \in \tilde V$, define the row vector 
\[r_v = 
\sum_{e \in \Ein(v)} \chi(e) - \sum_{e \in \Eout(v)} \chi(e).
\]
These vectors are the rows of the signed incidence matrix of
$\tilde\Gamma$, and since $\tilde\Gamma$ is connected, they span a
space of dimension $|\tilde V|-1$ (\cite[Ex.~1.5.6]{BM-book08}).
Now observe that for all $v \in \tilde V$,
\[
\sum_{e \in \Ein(v)} \psi(e) = \sum_{e \in \Eout(v)} \psi(e).
\]
Since $\psi$ is linear, this is equivalent to
\[
\psi\left(\sum_{e \in \Ein(v)} \chi(e) - \sum_{e \in \Eout(v)} \chi(e)\right) = 0.
\]
Hence each $r_v$ is in the kernel of $\psi$, and hence $\dim \ker \psi
\geq |\tilde V|-1$. Using \eqref{eq:flow}, we obtain
$\dim \mathscr T = \dim \im \psi = |E|-\dim \ker \psi \leq |E|-|\tilde V|+1 = \dim F$.
\end{proof}

\subsection*{The lower bound}

To obtain the lower bound, we define a linear map \(\varrho : E_1 \otimes \cdots \otimes E_d \to \IC^E \) such that the image of $\varrho$ restricted to $\mathscr{T}$ equals $F$.  This will imply that $\dim \mathscr{T} \geq \dim F$, thereby achieving the required lower bound.  

We define the linear map $\varrho$ 
%\(
%\varrho : E_1 \otimes \cdots \otimes E_d \to \IC^E
%\)
on standard basis elements $x_{e_1} \otimes \cdots \otimes x_{e_d}$
as follows, 
\[\varrho(x_{e_1} \otimes \cdots \otimes x_{e_d}) := \chi(e_1)+\cdots+\chi(e_d), \]
and then extend it to the domain $E_1 \otimes \cdots \otimes E_d$ via linear continuation.
%The next lemma finishes the proof of Theorem~\ref{thm:dimT}.

\begin{lemma}\label{lem:dimTgeqF}
Let $\varrho|_{\mathscr T}$ denote the restriction of $\varrho$ to the linear subspace $\mathscr T$.
Then, $\im \varrho|_{\mathscr T} = F$.
In particular, $\dim \mathscr T \geq \dim F = |E|-|\tilde V|+1$.
\end{lemma}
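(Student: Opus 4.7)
The plan is to prove the two containments $\im \varrho|_{\mathscr T} \subseteq F$ and $\im \varrho|_{\mathscr T} \supseteq F$ separately; the dimension claim is then automatic from $\dim\mathscr T \geq \dim\im \varrho|_{\mathscr T}$. The containment $\im \varrho|_{\mathscr T}\subseteq F$ is just unwinding definitions: since $\varrho(\varphi(p))=\sum_{e'\in p}\chi(e')$ equals the characteristic vector $c_p$ of the valid path $p$, we get $\varrho(\psi(e))=\sum_{p\in\mathscr P:\,e\in p}c_p$, and because $p$ is a simple length-$d$ directed cycle in $\tilde\Gamma$ after identifying $V^1$ with $V^{d+1}$, each $c_p$ is a flow, hence $\varrho(\psi(e))\in F$.

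For the reverse containment I would first reduce to showing that the $c_p$ alone already span $F$. Let $B\in\{0,1\}^{E\times\mathscr P}$ be the edge-path incidence matrix; then the $e$-th row of $BB^T$ is exactly $\varrho(\psi(e))$ while the $p$-th column of $B$ is $c_p$. Because $B$ has integer entries, $\mathrm{rank}_\IC(BB^T)=\mathrm{rank}_\IC(B)$ (reducing to the classical real identity), so the obvious inclusion $\im(BB^T)\subseteq\im(B)$ is an equality, i.e.\ $\textup{linspan}\{\varrho(\psi(e))\}=\textup{linspan}\{c_p\}$. It therefore suffices to prove $\textup{linspan}\{c_p : p\in\mathscr P\}=F$.

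Since $F$ is spanned by characteristic vectors of simple directed cycles in $\tilde\Gamma$ and every such cycle has length $kd$ for some $k\geq 1$ (edges advance layers cyclically), I would induct on $k$. The case $k=1$ is tautological since length-$d$ simple cycles are precisely valid paths. For $k\geq 2$, let $C$ visit $V^1$ at distinct vertices $v^1_{a^{(1)}},\dots,v^1_{a^{(k)}}$, split $C$ into length-$d$ segments $P_i$ from $v^1_{a^{(i)}}$ to $v^1_{a^{(i+1)}}$, and let $b_i\in V^d$ be the penultimate vertex of $P_i$. Replacing the final edge $(b_i,a^{(i+1)},d)$ of $P_i$ by $(b_i,a^{(i)},d)$ gives a valid path $\tilde P_i$, and
\[
c_C - \sum_{i=1}^k c_{\tilde P_i} \;=\; \sum_{i=1}^k\bigl(\chi((b_i,a^{(i+1)},d))-\chi((b_i,a^{(i)},d))\bigr)=:R,
\]
a flow supported on layer-$d$ edges alone; equivalently, a $w_d\times w_1$ matrix with vanishing row and column sums. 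Such matrices are spanned by elementary rectangles $\chi((b,j,d))-\chi((b,j',d))-\chi((b',j,d))+\chi((b',j',d))$, and each rectangle is the alternating sum $c_{Q_{jb}}-c_{Q_{jb'}}-c_{Q_{j'b}}+c_{Q_{j'b'}}$ of four valid paths sharing a fixed choice of intermediate vertices $k_2,\dots,k_{d-1}$ (available because adjacent layers of $\Gamma_{\textup{com}}$ form a complete bipartite graph and all $w_i\geq 2$). Hence $R$, and therefore $c_C$, lies in $\textup{linspan}\{c_p\}$, closing the induction.

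The main obstacle I foresee is precisely the analysis of the residual $R$: one has to see that $R$ is not just any flow but a bipartite zero-row-and-column-sum matrix on layer-$d$ edges, and that every elementary rectangle can be realised as a $\IC$-combination of actual \emph{valid} paths (not merely directed cycles). Both facts rely on the complete-bipartite inter-layer structure of $\Gamma_{\textup{com}}$ and the standing hypothesis $w_i\geq 2$ from Theorem~\ref{thm:separation}; once they are in place the induction closes and delivers $\im \varrho|_{\mathscr T}=F$, whence $\dim\mathscr T\geq\dim F=|E|-|\tilde V|+1$.
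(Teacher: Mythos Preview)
Your proof is correct and takes a genuinely different route from the paper's.

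The paper establishes $F\subseteq\im\varrho|_{\mathscr T}$ in two steps that are both different from yours. First (Claim~\ref{cla:flowspan}) it shows that $F$ is spanned by the length-$d$ directed cycles, by exhibiting a particular spanning tree of $\tilde\Gamma$ and writing each fundamental cycle explicitly as a $\pm 1$ combination of two or three length-$d$ directed cycles via a short case analysis on the layer of the non-tree edge. Second (Claim~\ref{cla:construction}) it shows that the characteristic flow of each length-$d$ cycle lies in $\im\varrho|_{\mathscr T}$ by an explicit iterative ``support-shrinking'' construction: starting from a normalised $\varrho(\bar\psi(e_1))$ and successively adding and subtracting suitable multiples of $\varrho(\bar\psi(e_i))$ and $\varrho(\bar\psi(e_i^{(j)}))$ until only the target cycle remains.

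Your argument replaces the second step entirely by the rank identity $\im(BB^T)=\im(B)$, which is a clean shortcut: it immediately gives $\textup{span}\{\varrho(\psi(e))\}=\textup{span}\{c_p\}$ and obviates any need to exhibit explicit preimages in $\mathscr T$. For the first step you use a different mechanism than the paper's spanning-tree case analysis: you split a long simple cycle at its layer-$1$ vertices, rewire the last edges to obtain valid paths, and observe that the residual is a zero-row/column-sum matrix on $E^d$, hence a combination of elementary rectangles, each realised by four valid paths sharing intermediate vertices. Both arguments have comparable length; yours is a bit more structural, the paper's a bit more concrete. One small point: you assert without proof that $F$ is spanned by simple \emph{directed} cycles; this is standard for strongly connected digraphs (flow decomposition: follow edges of positive value until a vertex repeats), and $\tilde\Gamma$ is strongly connected because consecutive layers form complete bipartite graphs, but it is worth a sentence.
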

\begin{proof}
  To prove equality it suffices to show  $\im \varrho|_{\mathscr T} \subseteq F$ and  $F \subseteq \im \varrho|_{\mathscr T}$.

  The first containment is easy to see. For an edge $e$, consider the image of $\psi(e)$ under the map~$\varrho$,
  \[\varrho(\psi(e)) = \sum_{e \in p \in \mathscr P} \sum_{e' \in p}\chi(e').\]
  Observe that for a path $p \in \mathscr P$,  $\sum_{e' \in p}\chi(e')$ is a flow on $\tilde\Gamma$ and hence it belongs to $F$. Thus, we have $\varrho(\psi(e)) \in F$. Since $\mathscr{T}$ is spanned by $\psi(e)$, for $e \in E$, we obtain that $\im \varrho|_{\mathscr T} \subseteq F$.

  To establish the second containment it suffices to show that the image of $\mathscr T$ under the map $\varrho$ contains a basis of $F$. We identify a specific basis for $F$ in Claim~\ref{cla:flowspan} and prove that it is contained in $\im \varrho|_{\mathscr T}$ in Claim~\ref{cla:construction} to complete the argument.  
\end{proof}

We identify directed cycles with their characteristic flows, i.e., flows that have value 1 on the cycle's edges and 0 everywhere else.
We also identify directed cycles that use edges in any direction with their characteristic flow: the characteristic flow is defined to take the value 1 on an edge $e$ if $e$ is traversed in the direction of $e$,
and value $-1$ on $e$ if $e$ is traversed against its direction.

From the theory of flows we know that for every (undirected) spanning tree $T$ of $\tilde\Gamma$, the vector space $F \in \IC^E$ has a basis given by the
characteristic flows of cycles that only use edges from $T$ and exactly one additional edge (for example, see Theorem~20.8 in~\cite{BM-book08}).
Thus, the cycle flows corresponding to the elements not in the spanning tree form a basis of~$F$.
%We now use this fact to obtain a specific basis for~$F$. 

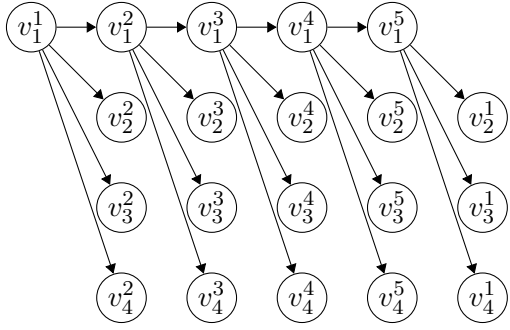
\begin{figure}
\begin{center}
\begin{tikzpicture}[scale=1.2]
\node[draw,circle,inner sep=1pt] (v11) at (0,0) {$v^1_1$};
%\node[draw,circle,inner sep=1pt] (v12) at (0,-1) {$v^1_2$};
%\node[draw,circle,inner sep=1pt] (v13) at (0,-2) {$v^1_3$};
%\node[draw,circle,inner sep=1pt] (v14) at (0,-3) {$v^1_4$};
%
\node[draw,circle,inner sep=1pt] (v21) at (1,0) {$v^2_1$};
\node[draw,circle,inner sep=1pt] (v22) at (1,-1) {$v^2_2$};
\node[draw,circle,inner sep=1pt] (v23) at (1,-2) {$v^2_3$};
\node[draw,circle,inner sep=1pt] (v24) at (1,-3) {$v^2_4$};
\node[draw,circle,inner sep=1pt] (v31) at (2,0) {$v^3_1$};
\node[draw,circle,inner sep=1pt] (v32) at (2,-1) {$v^3_2$};
\node[draw,circle,inner sep=1pt] (v33) at (2,-2) {$v^3_3$};
\node[draw,circle,inner sep=1pt] (v34) at (2,-3) {$v^3_4$};
\node[draw,circle,inner sep=1pt] (v41) at (3,0) {$v^4_1$};
\node[draw,circle,inner sep=1pt] (v42) at (3,-1) {$v^4_2$};
\node[draw,circle,inner sep=1pt] (v43) at (3,-2) {$v^4_3$};
\node[draw,circle,inner sep=1pt] (v44) at (3,-3) {$v^4_4$};
\node[draw,circle,inner sep=1pt] (v51) at (4,0) {$v^5_1$};
\node[draw,circle,inner sep=1pt] (v52) at (4,-1) {$v^5_2$};
\node[draw,circle,inner sep=1pt] (v53) at (4,-2) {$v^5_3$};
\node[draw,circle,inner sep=1pt] (v54) at (4,-3) {$v^5_4$};
%
%\node[draw,circle,inner sep=1pt] (v61) at (5,0) {$v^1_1$};
\node[draw,circle,inner sep=1pt] (v62) at (5,-1) {$v^1_2$};
\node[draw,circle,inner sep=1pt] (v63) at (5,-2) {$v^1_3$};
\node[draw,circle,inner sep=1pt] (v64) at (5,-3) {$v^1_4$};
\draw[-Triangle] (v11) -- (v21);
\draw[-Triangle] (v11) -- (v22);
\draw[-Triangle] (v11) -- (v23);
\draw[-Triangle] (v11) -- (v24);
\draw[-Triangle] (v21) -- (v31);
\draw[-Triangle] (v21) -- (v32);
\draw[-Triangle] (v21) -- (v33);
\draw[-Triangle] (v21) -- (v34);
\draw[-Triangle] (v31) -- (v41);
\draw[-Triangle] (v31) -- (v42);
\draw[-Triangle] (v31) -- (v43);
\draw[-Triangle] (v31) -- (v44);
\draw[-Triangle] (v41) -- (v51);
\draw[-Triangle] (v41) -- (v52);
\draw[-Triangle] (v41) -- (v53);
\draw[-Triangle] (v41) -- (v54);
\draw[-Triangle] (v51) -- (v62);
\draw[-Triangle] (v51) -- (v63);
\draw[-Triangle] (v51) -- (v64);
\end{tikzpicture}
\end{center}
\caption{The spanning tree construction for width $4$ and $d=5$.}
\label{fig:spanningtree}
\end{figure}

\begin{claim}\label{cla:flowspan}
$F$ is spanned by the set of directed cycles in $\tilde \Gamma$ of length exactly $d$.
\end{claim}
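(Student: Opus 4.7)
The plan is to work with the explicit spanning tree $T$ of $\tilde\Gamma$ depicted in Figure~\ref{fig:spanningtree}, namely: root $T$ at $v^1_1$, take all tree edges $(v^i_1, v^{i+1}_j)$ for $i \in [d-1]$ and $j \in [w_{i+1}]$, and include the edges $(v^d_1, v^1_j)$ for $j = 2, \ldots, w_1$. A direct count gives $|\tilde V|-1$ edges, and connectedness plus acyclicity are immediate. By standard flow theory (e.g.\ \cite[Thm.~20.8]{BM-book08}) the signed fundamental cycles $\{\Phi_e : e \in E \setminus T\}$ form a basis of $F$, so it suffices to exhibit each $\Phi_e$ as a $\IC$-linear combination of directed length-$d$ cycles in $\tilde\Gamma$.

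I would then split the non-tree edges into three families. For $e = (v^d_a, v^1_1)$ with arbitrary $a \in [w_d]$ (which covers both the ``closing'' edge $a = 1$ and the edges with $a \neq 1$ returning to the root), one checks that $\Phi_e$ is already the directed length-$d$ cycle $v^1_1 \to v^2_1 \to \cdots \to v^{d-1}_1 \to v^d_a \to v^1_1$. For a non-tree edge $e = (v^i_a, v^{i+1}_b)$ with $2 \leq i \leq d-1$ and $a \neq 1$, or $e = (v^d_a, v^1_b)$ with $a, b \neq 1$, the tree path between the endpoints of $e$ is a length-$3$ detour through $v^{i-1}_1, v^i_1$ (respectively $v^{d-1}_1, v^d_1$) and $\Phi_e$ is a signed $4$-cycle. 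Here I take two length-$d$ cycles $C_e, C'_e$ that agree everywhere except that $C_e$ routes through $v^i_a$ and uses $e$, while $C'_e$ routes through $v^i_1$ and uses the parallel edge in layer $i$. In $C_e - C'_e$ all shared edges cancel by routine bookkeeping, leaving exactly $\Phi_e$.

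The main obstacle is the remaining boundary family $e = (v^1_a, v^2_b)$ with $a \neq 1$, because the tree parent of $v^1_a$ is $v^d_1$ while that of $v^2_b$ is $v^1_1$; the least common ancestor is the root, the tree path between them has length $d+1$, and $\Phi_e$ is supported on $d+2$ edges, so it cannot be the difference of only two length-$d$ cycles. Here I would use three: $C_1 = v^1_a \to v^2_b \to v^3_1 \to \cdots \to v^d_1 \to v^1_a$, $C_2 = v^1_1 \to v^2_b \to v^3_1 \to \cdots \to v^d_1 \to v^1_1$, and the canonical spine cycle $C_0 = v^1_1 \to v^2_1 \to \cdots \to v^d_1 \to v^1_1$. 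The shared ``tail'' $(v^2_b, v^3_1), (v^3_1, v^4_1), \ldots, (v^{d-1}_1, v^d_1)$ of $C_1$ and $C_2$ cancels in $C_1 - C_2$, and adding $C_0$ simultaneously restores the forward spine of $\Phi_e$ and kills the spurious $-(v^d_1, v^1_1)$ contribution, yielding $\Phi_e = C_1 - C_2 + C_0$. The standing hypothesis $d \geq 3$ from Section~\ref{sec:f0-construction} is used essentially here so that $v^3_1$ exists and the tail cancellation can take place; indeed for $d = 2$ the claim genuinely fails (e.g.\ for $w = (2,2)$ the four length-$2$ cycles span only a $4$-dimensional subspace of the $5$-dimensional $F$). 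Collecting the three cases gives $F = \mathrm{span}\{\Phi_e\} \subseteq \mathrm{span}\{\text{directed length-}d\text{ cycles}\} \subseteq F$, so equality holds.
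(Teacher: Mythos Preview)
Your proof is correct and follows essentially the same approach as the paper: the same spanning tree, the same fundamental-cycle basis, and the same decomposition of each fundamental cycle as a signed sum of at most three directed length-$d$ cycles (your $C_1 - C_2 + C_0$ for the $E^1$ case matches the paper's $\chi(C_2) - \chi(C_1) + \chi(C_3)$ up to relabeling). Your observation that the claim genuinely fails for $d = 2$ is a nice addition the paper does not make explicit; one minor imprecision is that for $e = (v^1_a, v^2_1)$ with $a \neq 1$ the fundamental cycle already has length $d$ rather than $d+2$, but your formula still yields the correct answer there since then $C_2 = C_0$.
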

\begin{proof}
We construct a spanning tree $\tau$ as follows,
which will be a tree whose edges are all directed away from its root.
Informally, the tree is given by the following subgraph, we make the first vertex in $V^1$ as root, and include all the outgoing edges incident to it. We then move to the first vertex in $V^2$ and include all the outgoing edges incident to it. We continue in this way until we reach $V^d$. Upon reaching the first vertex in $V^d$ we include all but one outgoing edges incident to it. The one that is an incoming edge to the root is not included. Figure~\ref{fig:spanningtree} illustrates the construction. We now formally define this.

Let $v^i_1 \in V^i$ denote the first vertex in the layer $i$, $1 \leq i \leq d$. Further recall $\Ein(v) \subseteq E$ and $\Eout(v) \subseteq E$ denote the set of incoming and outgoing edges, respectively, incident to $v$. Define the edge set
\[
\tau:= \left(\bigcup_{i=1}^d \Eout(v_1^i)\right) \setminus \{(v_1^d,v_1^1)\},
\]
which is a spanning tree in $\tilde \Gamma$. We know that every edge not in
the tree when added to the tree gives a unique undirected cycle. We now show
that the characteristic flows of these undirected cycles can be expressed
as a linear combination of the characteristic flows of directed cycles of length $d$.
For $e \in E \setminus \tau$, let $c_e$ denote the characteristic flow of
the unique undirected cycle that uses $e$ in its correct direction and only edges of $\tau$.
We argue depending on which layer the edge $e$ belongs to.

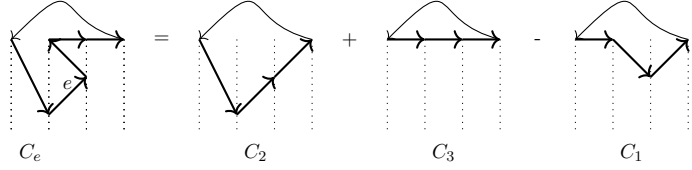
\begin{figure}
\begin{center}
\begin{tikzpicture}[scale=0.5, every node/.style={scale=0.7}]

    \coordinate (up) at (0,1) {};
    \coordinate (down) at (0,-1) {};
    \coordinate (left) at (-1,0) {};
    \coordinate (right) at (1,0) {};

    \coordinate (Ce) at (1,0) {};
    \coordinate (C2) at (6,0) {};
    \coordinate (C3) at (11,0) {};
    \coordinate (C1) at (16,0) {};

    \path [decoration=arrows, decorate] (Ce) -- ++($2*(down)+(right)$) -- node[above] {$e$}
    ++($(up)+(right)$) -- ++($(up)+(left)$) --
    ++($(right)$) --     ++($(right)$); 
    \draw[->] ($(Ce)+3*(right)$) .. controls ($(Ce)+(right)+0.5*(up)$) and
    ($(Ce)+2*(right)+2*(up)$)    .. (Ce);  
    \node[left] (labelCe) at ($(Ce)+3*(down)+(right)$) {$C_e$};
    \foreach \i in {0,1,2,3}
      \draw[dotted] ($(Ce)+(\i,0)$) -- ++($2.5*(down)$);
    
    \node at ($(C2) + (left)$) {=};
    \node at ($(C3) + (left)$) {+};
    \node at ($(C1) + (left)$) {-};
    
    \path [decoration=arrows, decorate] (C2) -- ++($2*(down)+(right)$) -- 
    ++($(up)+(right)$) -- ++($(up)+(right)$); 
    \draw[->] ($(C2)+3*(right)$) .. controls ($(C2)+(right)+0.5*(up)$) and
    ($(C2)+2*(right)+2*(up)$)    .. (C2);  
    \node[right] (labelC2) at ($(C2)+3*(down)+(right)$) {$C_2$};

    \path [decoration=arrows, decorate] (C3) -- ++($(right)$) -- 
    ++($(right)$) -- ++($(right)$); 
    \draw[->] ($(C3)+3*(right)$) .. controls ($(C3)+(right)+0.5*(up)$) and
    ($(C3)+2*(right)+2*(up)$)    .. (C3);  
    \node[right] (labelC3) at ($(C3)+3*(down)+(right)$) {$C_3$};
    \foreach \i in {0,1,2,3}
      \draw[dotted] ($(Ce)+(\i,0)$) -- ++($2.5*(down)$);

    \path [decoration=arrows, decorate] (C1) -- ++($(right)$) -- 
    ++($(down)+(right)$) -- ++($(up)+(right)$); 
    \draw[->] ($(C1)+3*(right)$) .. controls ($(C1)+(right)+0.5*(up)$) and
    ($(C1)+2*(right)+2*(up)$)    .. (C1);  
    \node[right] (labelC1) at ($(C1)+3*(down)+(right)$) {$C_1$};

    \foreach \i in {0,1,2,3}
        \foreach \j in {Ce,C1,C2,C3}
      \draw[dotted] ($(\j)+(\i,0)$) -- ++($2.5*(down)$);
\end{tikzpicture}
\end{center}
\caption{Decomposing a cycle of length $d+2$ as a linear combination of cycles of length $d$. The figure is an illustration when $d = 3$. The dotted layers in each cycle from the left are $V^3$, $V^1$, $V^2$, and $V^3$ again.}
\label{fig:linear-combination-of-cycles}
\end{figure}

\begin{itemize}
\item Suppose $e \in E^1 \setminus \tau$.
  \begin{itemize}
  \item If $e$ is incident to $v_1^2$, the first vertex in $V^2$, then the inclusion of $e$ creates a directed cycle of length $d$. Hence, $c_e$ equals the characteristic flow of this directed cycle.
  \item Otherwise, the inclusion of $e$ creates an undirected cycle of length $d+2$. If $e = (v^1_{j_1}, v^2_{j_2})$ for some $j_1 \in [2,w_1]$ and $j_2 \in [2,w_2]$, then the cycle $c_e$ is given as follows: \[v^d_1 - v^1_{j_1} - v^2_{j_2} - v^1_1 - v^2_1 - \cdots - v^{d-1}_1 - v^d_1.\]
    Consider the following two directed cycles:
    \begin{align*}
      C_1 ~\colon &  ~v^1_1 - v^2_{j_2} - \cdots - v^d_1 - v^1_1 \mbox { and } \\
      C_2 ~\colon &  ~v^1_{j_1} - v^2_{j_2} - \cdots - v^d_1 - v^1_{j_1},
      %C_3 \colon & v^1_1 - v^2_1 - v^3_1 - \cdots - v^d_1 - v^1_1 
    \end{align*}
    such that the part $ v^2_{j_2} - \cdots - v^d_1$ between $v^2_{j_2}$ and $v^d_1$ in the two cycles is the same. 
    Let us denote the characteristic flow of a cycle $C$ by $\chi(C)$. We now observe that $\chi(C_2) - \chi(C_1)$ equals the characteristic flow of the undirected cycle $v^1_{j_1} - v^2_{j_2} - v^1_1 - v^d_1 - v^1_{j_1}$. This is because the common part in $C_1$ and $C_2$ cancels out.
    To $\chi(C_2) - \chi(C_1)$ we add the characteristic flow of
    the directed cycle,
    \[C_3 ~\colon~ v^1_1 - v^2_1 - v^3_1 - \cdots - v^{d-1}_1 - v^d_1 - v^1_1 .\] 
    It is now easily seen that $\chi(C_2) - \chi(C_1) + \chi(C_3)$ equals the characteristic flow of the cycle $c_e$ (see Figure~\ref{fig:linear-combination-of-cycles} for an illustration). 
  \end{itemize}

\item Suppose $e \in E^d \setminus \tau$.
  \begin{itemize}
  \item If $e$ is incident to $v^1_1$, the first vertex in $V^1$, then as before the inclusion of $e$ creates a directed cycle of length $d$. Hence, $c_e$ equals the characteristic flow of this directed cycle.
  \item Otherwise, the inclusion of $e$ creates an undirected cycle of length 4. If $e = (v^d_{j_1}, v^1_{j_2})$ for some $j_1 \in [2,w_d]$ and $j_2 \in [2,w_1]$, then the cycle $c_e$ is given as follows:
    \[v^d_{j_1} - v^1_{j_2} - v^d_1 - v^{d-1}_1 - v^d_{j_1} .\]
    Consider the following two directed cycles:
    \begin{align*}
      C_4 ~\colon &  ~v^1_{j_2} - \cdots - v^{d-1}_1 - v^d_1 - v^1_{j_2} \mbox { and } \\
      C_5 ~\colon &  ~v^1_{j_2} - \cdots - v^{d-1}_1 -  v^d_{j_1} - v^1_{j_2},
    \end{align*}
    such that the part $ v^1_{j_2} - \cdots - v^{d-1}_1$ between $v^1_{j_2}$ and $v^{d-1}_1$ in the two cycles is the same.
    We now claim that $\chi(C_5) - \chi(C_4)$ equals the characteristic flow of $c_e$. This is because the common part in $C_4$ and $C_5$ cancels out. 
  \end{itemize}

\item Otherwise $e \in E^i \setminus \tau$ for some $i \in \{2,\ldots , d-1\}$. In such a case inclusion of $e$ creates an undirected cycle of length 4. We can again argue exactly like in the previous case, and so we omit the argument here. \qedhere
\end{itemize}
\end{proof}

We now prove that the generating set given by the directed cycles of length $d$ is contained in the image of $\mathscr{T}$ under the map $\varrho$. 

\begin{claim}\label{cla:construction}
$\im(\varrho|_{\mathscr T})$ contains the characteristic flow of each directed cycle of length $d$.
\end{claim}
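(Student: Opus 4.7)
The plan is to prove the stronger statement $\im \varrho|_{\mathscr T} = F$; combined with Claim~\ref{cla:flowspan}, this immediately yields the claim since every length-$d$ directed cycle flow lies in $F$. Since $\im \varrho|_{\mathscr T} \subseteq F$ has already been shown (in Lemma~\ref{lem:dimTgeqF} above), only the reverse containment is needed, and for this a pure dimension argument suffices.

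First I would identify $\varrho \circ \psi : \mathbb C^E \to F$ explicitly. Let $M$ denote the $|E| \times |\mathscr P|$ edge-path incidence matrix with entries $M_{e,q} = \mathbf{1}[e \in q]$. Computing $\varrho(\psi(e))(e') = |\{q \in \mathscr P : e, e' \in q\}| = (MM^\top)_{e,e'}$ shows that, under the identification of $F$ as a subspace of $\mathbb C^E$ via $\chi$, the composition $\varrho \circ \psi$ equals multiplication by the Gram matrix $MM^\top$.

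Next I would compute $\operatorname{rank}(MM^\top)$. The columns of $M$ are precisely the characteristic functions $\chi(C_q)$ for $q \in \mathscr P$, which span $F$ by Claim~\ref{cla:flowspan}, so $\operatorname{rank}(M) = \dim F = |E| - |\tilde V| + 1$. Because $M$ has $0/1$ (in particular, real) entries, the identity $\operatorname{rank}(MM^\top) = \operatorname{rank}(M)$ holds: over $\mathbb R$ it follows from $x^\top MM^\top x = \|M^\top x\|^2$, and the rank of a real matrix is the same over $\mathbb R$ and $\mathbb C$. Hence $\dim \ker(\varrho \circ \psi) = |E| - \operatorname{rank}(MM^\top) = |\tilde V| - 1$.

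Finally, Lemma~\ref{lem:dimTleqF} already produced $|\tilde V| - 1$ linearly independent elements in $\ker \psi \subseteq \ker(\varrho \circ \psi)$. Matching dimensions forces $\ker \psi = \ker(\varrho \circ \psi)$, so $\varrho|_{\mathscr T}$ is injective. Therefore $\dim \im \varrho|_{\mathscr T} = \dim \mathscr T = |E| - (|\tilde V| - 1) = \dim F$, and combined with $\im \varrho|_{\mathscr T} \subseteq F$ we conclude $\im \varrho|_{\mathscr T} = F$, which contains every length-$d$ cycle flow. The most delicate step I expect is the passage $\operatorname{rank}(MM^\top) = \operatorname{rank}(M)$, which can fail over $\mathbb C$ for matrices with genuinely complex entries but holds here because of the combinatorial $0/1$ structure of $M$.
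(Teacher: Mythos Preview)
Your argument is correct and takes a genuinely different route from the paper. The paper proves the claim constructively: for a fixed length-$d$ cycle $\{e_1,\ldots,e_d\}$ it writes down an explicit linear combination of the normalized vectors $\varrho(\bar\psi(e_i))$ and $\varrho(\bar\psi(e_i^{(j)}))$ (the $e_i^{(j)}$ being the other outgoing edges at the start of $e_i$) that telescopes down to the characteristic flow $\chi(e_1,\ldots,e_d)$; the iteration is illustrated in Figure~\ref{fig:flowI}. Your proof is instead a pure dimension count: you observe that $\varrho\circ\psi$ is the Gram matrix $MM^\top$ of the edge/valid-path incidence matrix, invoke Claim~\ref{cla:flowspan} to get $\operatorname{rank}(M)=\dim F$, use the real-matrix identity $\operatorname{rank}(MM^\top)=\operatorname{rank}(M)$, and then match $\ker(\varrho\circ\psi)$ against the $|\tilde V|-1$ kernel vectors already exhibited in Lemma~\ref{lem:dimTleqF}. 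This forces $\varrho|_{\mathscr T}$ to be injective and hence $\im\varrho|_{\mathscr T}=F$.

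What each approach buys: the paper's construction is self-contained (it does not feed Claim~\ref{cla:flowspan} or Lemma~\ref{lem:dimTleqF} back into the argument) and yields an explicit preimage for every cycle flow, which is conceptually nice for the ``tangent space $\simeq$ flow space'' picture. Your approach is shorter, avoids any computation with the specific flow values $\tfrac{1}{w_j\cdots}$, and neatly recycles already-proved ingredients; it also immediately gives the exact statement $\dim\mathscr T=\dim F$ (i.e., Theorem~\ref{thm:dimT}) and the injectivity of $\varrho|_{\mathscr T}$ as byproducts. The one place to be careful---and you flagged it---is that $\operatorname{rank}(MM^\top)=\operatorname{rank}(M)$ relies on $M$ being real; since $M$ is a $0/1$ matrix this is unproblematic.
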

\begin{proof}
Let $\{e_1,e_2,\ldots,e_d\} \subseteq E$ be a directed cycle of length $d$, where each $e_i$ points from a vertex in $V^i$ to a vertex in $V^{i+1}$.
Let $\{e_i^{(j)}\}$ denote the set of edges that start at the same vertex as $e_i$, but for which $e_i^{(j)} \neq e_i$.
Thus $|\{e_i^{(j)}\}| = |V^{i+1}|-1$.
Let \[
\bar\psi(e) := \frac 1 {|\{p \in \mathscr P \text{ with } e \in p\}|} \psi(e),
\]
so that $\varrho(\bar\psi(e))$ is a flow with value 1 on the edge $e$.
It is instructive to have a look at the left side of Figure~\ref{fig:flowI}, where
$\varrho(\bar\psi(e_1))$ is depicted.
Subtracting $\tfrac{1}{w_3} \sum_{j=1}^{w_3-1}\varrho(\bar\psi(e_2^{(j)}))$ and adding $\tfrac{w_3-1}{w_3} \varrho(\bar\psi(e_2))$ reduces the support significantly and brings us one step closer to the cycle, see the right side of Figure~\ref{fig:flowI}. We iterate this process until only the cycle is left.
\begin{figure}
\begin{tikzpicture}[scale=1.2]
\node[draw,circle,inner sep=3pt] (v11) at (0,0) {};
\node[draw,circle,inner sep=3pt] (v12) at (0,-1) {};
\node[draw,circle,inner sep=3pt] (v13) at (0,-2) {};
\node[draw,circle,inner sep=3pt] (v14) at (0,-3) {};
\node[draw,circle,inner sep=3pt] (v21) at (1,0) {};
\node[draw,circle,inner sep=3pt] (v22) at (1,-1) {};
\node[draw,circle,inner sep=3pt] (v23) at (1,-2) {};
\node[draw,circle,inner sep=3pt] (v24) at (1,-3) {};
\node[draw,circle,inner sep=3pt] (v31) at (2,0) {};
\node[draw,circle,inner sep=3pt] (v32) at (2,-1) {};
\node[draw,circle,inner sep=3pt] (v33) at (2,-2) {};
\node[draw,circle,inner sep=3pt] (v34) at (2,-3) {};
\node[draw,circle,inner sep=3pt] (v41) at (3,0) {};
\node[draw,circle,inner sep=3pt] (v42) at (3,-1) {};
\node[draw,circle,inner sep=3pt] (v43) at (3,-2) {};
\node[draw,circle,inner sep=3pt] (v44) at (3,-3) {};
\node[draw,circle,inner sep=3pt] (v51) at (4,0) {};
\node[draw,circle,inner sep=3pt] (v52) at (4,-1) {};
\node[draw,circle,inner sep=3pt] (v53) at (4,-2) {};
\node[draw,circle,inner sep=3pt] (v54) at (4,-3) {};
\node[draw,circle,inner sep=3pt] (v61) at (5,0) {};
\node[draw,circle,inner sep=3pt] (v62) at (5,-1) {};
\node[draw,circle,inner sep=3pt] (v63) at (5,-2) {};
\node[draw,circle,inner sep=3pt] (v64) at (5,-3) {};
\node at (0.5,0.5) {$\tfrac 1 {w_{4} w_5}$};
\node at (1.5,0.5) {$\tfrac 1 {w_5}$};
\node at (2.5,0.5) {$1$};
\node at (3.5,0.5) {$\tfrac 1 {w_3}$};
\node at (4.5,0.5) {$\tfrac 1 {w_3 w_4}$};
\draw[-Triangle] (v11) -- (v21);
\draw[-Triangle] (v11) -- (v22);
\draw[-Triangle] (v11) -- (v23);
\draw[-Triangle] (v11) -- (v24);
\draw[-Triangle] (v12) -- (v21);
\draw[-Triangle] (v12) -- (v22);
\draw[-Triangle] (v12) -- (v23);
\draw[-Triangle] (v12) -- (v24);
\draw[-Triangle] (v13) -- (v21);
\draw[-Triangle] (v13) -- (v22);
\draw[-Triangle] (v13) -- (v23);
\draw[-Triangle] (v13) -- (v24);
\draw[-Triangle] (v14) -- (v21);
\draw[-Triangle] (v14) -- (v22);
\draw[-Triangle] (v14) -- (v23);
\draw[-Triangle] (v14) -- (v24);
\draw[-Triangle] (v21) -- (v31);
\draw[-Triangle] (v22) -- (v31);
\draw[-Triangle] (v23) -- (v31);
\draw[-Triangle] (v24) -- (v31);
\draw[-Triangle] (v31) -- (v41);
\draw[-Triangle] (v41) -- (v51);
\draw[-Triangle] (v41) -- (v52);
\draw[-Triangle] (v41) -- (v53);
\draw[-Triangle] (v41) -- (v54);
\draw[-Triangle] (v51) -- (v61);
\draw[-Triangle] (v51) -- (v62);
\draw[-Triangle] (v51) -- (v63);
\draw[-Triangle] (v51) -- (v64);
\draw[-Triangle] (v52) -- (v61);
\draw[-Triangle] (v52) -- (v62);
\draw[-Triangle] (v52) -- (v63);
\draw[-Triangle] (v52) -- (v64);
\draw[-Triangle] (v53) -- (v61);
\draw[-Triangle] (v53) -- (v62);
\draw[-Triangle] (v53) -- (v63);
\draw[-Triangle] (v53) -- (v64);
\draw[-Triangle] (v54) -- (v61);
\draw[-Triangle] (v54) -- (v62);
\draw[-Triangle] (v54) -- (v63);
\draw[-Triangle] (v54) -- (v64);
\end{tikzpicture}\quad\quad\quad
\begin{tikzpicture}[scale=1.2]
\node[draw,circle,inner sep=3pt] (v11) at (0,0) {};
\node[draw,circle,inner sep=3pt] (v12) at (0,-1) {};
\node[draw,circle,inner sep=3pt] (v13) at (0,-2) {};
\node[draw,circle,inner sep=3pt] (v14) at (0,-3) {};
\node[draw,circle,inner sep=3pt] (v21) at (1,0) {};
\node[draw,circle,inner sep=3pt] (v22) at (1,-1) {};
\node[draw,circle,inner sep=3pt] (v23) at (1,-2) {};
\node[draw,circle,inner sep=3pt] (v24) at (1,-3) {};
\node[draw,circle,inner sep=3pt] (v31) at (2,0) {};
\node[draw,circle,inner sep=3pt] (v32) at (2,-1) {};
\node[draw,circle,inner sep=3pt] (v33) at (2,-2) {};
\node[draw,circle,inner sep=3pt] (v34) at (2,-3) {};
\node[draw,circle,inner sep=3pt] (v41) at (3,0) {};
\node[draw,circle,inner sep=3pt] (v42) at (3,-1) {};
\node[draw,circle,inner sep=3pt] (v43) at (3,-2) {};
\node[draw,circle,inner sep=3pt] (v44) at (3,-3) {};
\node[draw,circle,inner sep=3pt] (v51) at (4,0) {};
\node[draw,circle,inner sep=3pt] (v52) at (4,-1) {};
\node[draw,circle,inner sep=3pt] (v53) at (4,-2) {};
\node[draw,circle,inner sep=3pt] (v54) at (4,-3) {};
\node[draw,circle,inner sep=3pt] (v61) at (5,0) {};
\node[draw,circle,inner sep=3pt] (v62) at (5,-1) {};
\node[draw,circle,inner sep=3pt] (v63) at (5,-2) {};
\node[draw,circle,inner sep=3pt] (v64) at (5,-3) {};
\node at (0.5,0.5) {$\tfrac 1 {w_{4} w_5}$};
\node at (1.5,0.5) {$\tfrac 1 {w_5}$};
\node at (2.5,0.5) {$1$};
\node at (3.5,0.5) {$1$};
\node at (4.5,0.5) {$\tfrac 1 {w_4}$};
\draw[-Triangle] (v11) -- (v21);
\draw[-Triangle] (v11) -- (v22);
\draw[-Triangle] (v11) -- (v23);
\draw[-Triangle] (v11) -- (v24);
\draw[-Triangle] (v12) -- (v21);
\draw[-Triangle] (v12) -- (v22);
\draw[-Triangle] (v12) -- (v23);
\draw[-Triangle] (v12) -- (v24);
\draw[-Triangle] (v13) -- (v21);
\draw[-Triangle] (v13) -- (v22);
\draw[-Triangle] (v13) -- (v23);
\draw[-Triangle] (v13) -- (v24);
\draw[-Triangle] (v14) -- (v21);
\draw[-Triangle] (v14) -- (v22);
\draw[-Triangle] (v14) -- (v23);
\draw[-Triangle] (v14) -- (v24);
\draw[-Triangle] (v21) -- (v31);
\draw[-Triangle] (v22) -- (v31);
\draw[-Triangle] (v23) -- (v31);
\draw[-Triangle] (v24) -- (v31);
\draw[-Triangle] (v31) -- (v41);
\draw[-Triangle] (v41) -- (v51);
\draw[-Triangle] (v51) -- (v61);
\draw[-Triangle] (v51) -- (v62);
\draw[-Triangle] (v51) -- (v63);
\draw[-Triangle] (v51) -- (v64);
\end{tikzpicture}
\caption{On the left: $\varrho(\bar\psi(e_1))$.
On the right: $\varrho(\bar\psi(e_1))
- \tfrac{1}{w_3} \sum_{j=1}^{w_3-1}\varrho(\bar\psi(e_2^{(j)}))
+\tfrac{w_3-1}{w_3} \varrho(\bar\psi(e_2)) 
$.
This is the case $d=5$ and format $(4,4,4,4,4)$.
Edges that are not drawn carry 0 flow. All edges in the same layer carry either 0 flow or the value that is depicted above the edge layer.
For the purposes of illustation, $e_1$ is the top edge in the \emph{center}.
Here we assume that each $e_i$ points from the first vertex $V^i$ to the first vertex in $V^{i+1}$.}
 \label{fig:flowI}
\end{figure}
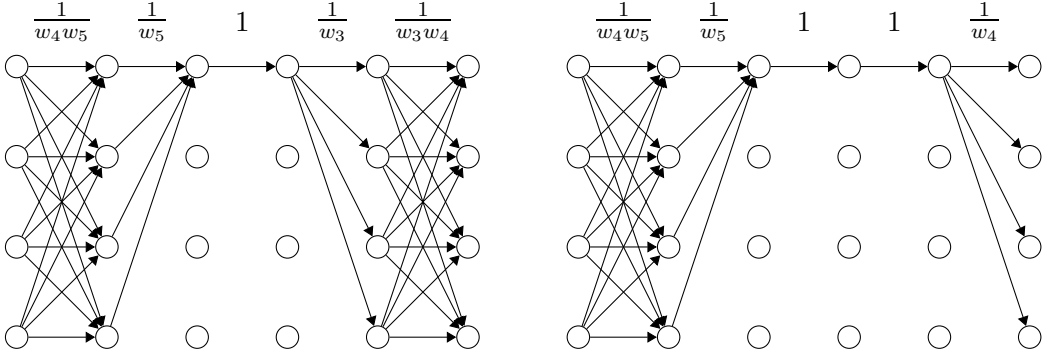
Formally:
\begin{eqnarray*}
\chi(e_1,\ldots,e_d) &=& \varrho(\bar\psi(e_1)) \\
&+& \tfrac{w_3-1}{w_3} \varrho(\bar\psi(e_2)) - \tfrac{1}{w_3} \sum_{j=1}^{w_3-1}\varrho(\bar\psi(e_2^{(j)})) \\
&+& \cdots \\
&+& \tfrac{w_d-1}{w_d} \varrho(\bar\psi(e_{d-1})) - \tfrac{1}{w_d} \sum_{j=1}^{w_d-1}\varrho(\bar\psi(e_{d-1}^{(j)})).
\end{eqnarray*}
\qedhere
\end{proof}

%Combining Claim~\ref{cla:flowspan} and Claim~\ref{cla:construction} finishes the proof of Lemma~\ref{lem:dimTgeqF}.
%Combining Lemma~\ref{lem:dimTgeqF} and Lemma~\ref{lem:dimTleqF} finishes the proof of Theorem~\ref{thm:dimT}.

\subsection*{The stronger upper bound via parities}

We now proceed to upper bound $\dim \mathscr{T'}$ (Theorem~\ref{thm:dimTprime}).  The proof is analogous to the proof of Lemma~\ref{lem:dimTleqF}.    

%\section{Parities}
\begin{theorem}[Restatement of Theorem~\ref{thm:dimTprime}]
$\dim \mathscr T' \leq |E| - |\tilde V|$.
\end{theorem}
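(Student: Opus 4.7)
The plan is to adapt the proof of Lemma~\ref{lem:dimTleqF} so as to exhibit not just $|\tilde V|-1$, but $|\tilde V|$ linearly independent vectors in $\ker \psi'$. As in that proof, the same vectors $r_v = \sum_{e \in \Ein(v)}\chi(e) - \sum_{e \in \Eout(v)}\chi(e)$, for $v \in \tilde V$, lie in $\ker \psi'$: any path $p \in \mathscr P'$ that passes through $v$ uses exactly one in-edge and one out-edge at $v$, so the contribution of $\varphi(p)$ in $\psi'(r_v)$ cancels, while paths avoiding $v$ contribute nothing. These $r_v$ already span a subspace of dimension $|\tilde V|-1$ (rows of the signed incidence matrix of a connected graph). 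The new input from the parity constraint will supply one additional kernel vector.

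For that additional vector I plan to use the parity weighting $u \in \IC^E$ defined by $u(e) = d-1$ on parity preserving edges and $u(e) = -1$ on parity changing edges. Since each $p \in \mathscr P'$ contains exactly one parity preserving and $d-1$ parity changing edges, $\sum_{e \in p} u(e) = (d-1) - (d-1) = 0$, which immediately yields $\psi'(u) = 0$. With this in hand the only remaining task is to show that $u$ is not in the span of the $r_v$, after which $\dim \ker \psi' \geq |\tilde V|$ and the rank-nullity theorem delivers the desired $\dim \mathscr T' \leq |E|-|\tilde V|$.

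The crux of the argument, and the place where the hypotheses $d \geq 3$ odd and $w_i \geq 2$ come into play, is the linear independence of $u$ from the $r_v$. If $u = \sum_v c_v r_v$ for some scalars $c_v$, then $u$ is a coboundary on $\tilde\Gamma$: $u(e) = c_{\textup{head}(e)} - c_{\textup{tail}(e)}$. Consequently $\sum_{e \in W} u(e) = 0$ for every directed closed walk $W$ in $\tilde\Gamma$. I would contradict this by exhibiting a valid path in $\Gamma$ (equivalently, a directed $d$-cycle in $\tilde\Gamma$) that uses exactly three parity preserving edges, for instance a path that starts with three same-parity steps $v^1_1 \to v^2_1 \to v^3_1 \to v^4_1$ and then fills the remaining $d-3$ layers with alternating parity-changing steps ending at $v^1_1$; since $d-3$ is even and $w_i \geq 2$ guarantees both parities in every layer, this closes up properly. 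The $u$-sum along this path is $3(d-1) - (d-3) = 2d \neq 0$, the required contradiction. Locating such a cycle and verifying its parity count is the main obstacle; everything else mirrors the template of Lemma~\ref{lem:dimTleqF}.
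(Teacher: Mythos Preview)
Your proposal is correct and follows essentially the same approach as the paper: the paper uses the identical kernel vectors $r_v$ and the same parity-weighted vector $u = (d-1)\sum_{e\text{ preserving}}\chi(e) - \sum_{e\text{ changing}}\chi(e)$, and establishes its linear independence from the $r_v$ by exhibiting a directed cycle in $\tilde\Gamma$ on which $u$ does not sum to zero (equivalently, a flow not orthogonal to $u$). The only difference is that the paper picks the simpler all-parity-preserving cycle $v^1_1 \to v^2_1 \to \cdots \to v^d_1 \to v^1_1$, whose $u$-sum is $d(d-1)\neq 0$, so no appeal to $d\geq 3$ or $w_i\geq 2$ is needed at this particular step.
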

\begin{proof}
  As in the proof of Lemma~\ref{lem:dimTleqF},
  for $v \in \tilde V$, we have 
\[
\sum_{e \in \Ein(v)} \psi'(e) = \sum_{e \in \Eout(v)} \psi'(e). 
\]
Furthermore, we have the following additional constraint on $\psi'$, 
\[
(d-1) \sum_{e \text{ parity preserving} } \psi'(e) =  \sum_{e \text{ parity changing} } \psi'(e).
\]
By the linearity of $\psi'$, we have
\[
\psi'\left((d-1)\sum_{e \text{ parity preserving}} \chi(e)  - \sum_{e \text{ parity changing}} \chi(e)\right) = 0. 
\]
Therefore, the kernel of $\psi'$ is spanned by the vectors $(\sum_{e \in \Ein(v)} \chi(e) - \sum_{e \in \Eout(v)} \chi(e))$, for $v \in \tilde V$,
and an additional vector
$((d-1)\sum_{e \text{ parity preserving}} \chi(e)  - \sum_{e \text{ parity changing}} \chi(e))$.

We now claim that the new vector is linearly independent from the earlier set of vectors. We prove the claim by constructing a vector in $\IC^E$ that is orthogonal to the earlier set of vectors but is non-orthogonal to the additional vector. One such vector is given by the characteristic flow of the directed cycle \(v^1_1 - v^2_1 - v^3_1 - \cdots - v^{d-1}_1 - v^d_1 - v^1_1\). 

Thus, it follows that $\dim \ker \psi' \geq |\tilde V|$, and hence $\dim \mathscr{T'} \leq |E| - |\tilde V|$.  
\end{proof}

In the next section we continue our investigation of comparing exact complexity classes with the approximative complexity classes. This would be a comparison between two well known classes, namely $\VQP$ and $\overline{\VNP}$.

\section{\texorpdfstring{$\VQP$}{VQP} versus \texorpdfstring{$\overline{\VNP}$}{VNPbar}} \label{sec:vqp}
In this section, we compare the complexity classes $\mathrm{VQP}$ and $\overline{\mathrm{VNP}}$. Valiant in his seminal paper \cite{valiantcomplete} defined the complexity classes that are now called as $\VP$ and $\VNP$, and the central question of algebraic complexity is to understand whether the two complexity classes are indeed different as sets (Valiant's hypothesis). 
B\"urgisser \cite{Burgisser2000} defined the complexity class $\VQP$ and related it to the complexity classes $\VP$ and $\VNP$. We proceed to define the above three classes for establishing the context. For an exhaustive treatment of the classes, we refer the readers to B\"{u}rgisser's monograph \cite{Burgisser2000} from where we are lifting the definitions. 
We first need to define so-called p-families.
\begin{definition}
A sequence $f = (f_n)$ of multivariate polynomials over a field $k$ is called a $p$-family (over $k$) iff the number of variables as well as the degree of $f_n$ are bounded by polynomial functions in $n$.
\end{definition}
We now need to define the model of computation and the notion of complexity in order to define the complexity classes of interest.

\begin{definition}
A \emph{straight-line program} $\Gamma$ (expecting $m$ inputs) represents a sequence $(\Gamma_1, \ldots, \Gamma_r)$ of instructions $\Gamma_{\rho} = (\omega_{\rho}; i_{\rho}, j_{\rho})$ with operation symbols $\omega_{\rho} \in \{ +, - , * \}$ and the address $i_{\rho}, j_{\rho}$ which are integers satisfying $-m < i_{\rho}, j_{\rho} < \rho$. We call $r$ the $size$ of $\Gamma$.
\end{definition}
So, essentially, in a straight-line program, we either perform addition or subtraction or multiplication on the inputs or the previously computed elements. The size of the straight-line program naturally induces a size complexity measure on polynomials as follows:
\begin{definition}
The \emph{complexity} $L(f)$ of a polynomial $f \in \IIF[x_1, \ldots, x_n]$ is the minimal size of a straight-line program computing $f$ from variables $x_i$ and constants in $\IIF$.
\end{definition}

We are now all set to define the above discussed complexity classes.
\begin{definition}
A $p$-family $f = (f_n)$ is said to be $p$-computable iff the complexity  $L(f_n)$ is a polynomially bounded function of $n$. $\VP_{\IIF}$  consists of all $p$-computable families over the field~$\IIF$.
\end{definition}

\begin{definition}
A $p$-family $f = (f_n)$ is said to be $p$-\emph{definable} iff there exists a $p$-\emph{computable} family $g = (g_n)$, $g_n \in \IIF[x_1, \ldots, x_{u(n)}]$, such that for all $n$
\begin{equation*}
f_n(x_1, \ldots, x_{v(n)}) = \sum\limits_{e \in \{0,1 \}^{u(n) - v(n)}} g_n(x_1, \ldots, x_{v(n)}, e_{v(n) + 1}, \ldots, e_{u(n)}).
\end{equation*} 
The set of $p$-definable families over $\IIF$ forms the complexity class $\VNP_{\IIF}$.
\end{definition}

\begin{definition}
A $p$-family $f = (f_n)$ is said to be $qp$-\emph{computable} iff the complexity $L(f_{n})$ is a quasi-polynomially bounded function of $n$. The complexity class $\VQP_{\IIF}$ consists of all $qp$-computable families over $\IIF$.
\end{definition}
In the above three definitions, if the underlying field is clear from the context, we can drop the subscript $\IIF$ and simply represent the classes as $\VP, \VNP$ and $\VQP$ respectively. In what follows, the underlying field is always assumed to be $\IQ$, the field of rational numbers.

In \cite{Burgisser2000}, B\"urgisser  showed the completeness of the determinant polynomial for $\VQP$ under $qp$-projections and strengthened Valiant's hypothesis of $\VNP \not\subseteq \VP$ to $\VNP \not\subseteq \VQP$ and called it \emph{Valiant's extended hypothesis} (see \cite{Burgisser2000}, section 2.5). He also established that $\VP \subsetneq \VQP$ and went on to show that $\VQP \not\subseteq \VNP$ (see \cite{Burgisser2000}, Proposition 8.5 and Corollary 8.9). The main observation of this section is that his proof is stronger and is sufficient to conclude that $\VQP$ is not contained in the closure of $\VNP$ either, where the closure is in the sense as mentioned in Section \ref{sec:intro}.

In fact, B\"urgisser in his monograph \cite{Burgisser2000} also gives a set of conditions which if the coefficients of a polynomial sequence satisfies, then that polynomial sequence cannot be in $\VNP$ (\cite{Burgisser2000}, Theorem 8.1). His theorem and the proof is inspired by Heintz and Sieveking \cite{heintz1980lower}. The second observation of this section is that this proof is even stronger and actually those conditions are sufficient to show that the given polynomial sequence is not contained in $\overline{\VNP}$ either.

We now discuss both the observations.

\subsection{\texorpdfstring{$\VQP \not\subseteq \overline{\VNP}$}{VQP not in VNPbar}}
We first show that there is a $\log n$ variate polynomial of degree $(n-1) \log n$ which is in $\VQP$ but not in $\overline{\VNP}$. In this exposition, for the sake of better readability, we do not present the B\"urgisser's statements in full generality since it is not essential for the theorem that we want to show here. Moreover, the less general version that we present here contains all the ideas for the theorem statements and their proofs.

\begin{theorem}\label{thm:vqpvnpbar}
Let $N_n := \{0, \ldots, n-1 \}^{\log n}$ and $f_n := \sum\limits_{\mu \in N_n}2^{2^{j(\mu)}}X_1^{\mu_1} \cdots X_{\log n}^{\mu_{\log n}}$, where $j(\mu) :=  \sum_{j = 1}^{\log n} \mu_{j}n^{j-1}$. 
Then $f_n \in \mathrm{VQP}$, but $f_n \notin \overline{\mathrm{VNP}}$, and hence $\VQP \not\subseteq \overline{\VNP}$.
\end{theorem}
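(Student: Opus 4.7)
The plan is to prove both claims separately, with the first being routine and the second obtained by strengthening B\"urgisser's argument from \cite{Burgisser2000}.

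For the upper bound $f_n \in \VQP$, I would construct a straight-line program by Horner-style nested evaluation: group monomials by the last exponent $\mu_{\log n}$, factor out $X_{\log n}^{\mu_{\log n}}$, and recurse on $(X_1,\ldots,X_{\log n - 1})$. Because the straight-line program model admits arbitrary field constants at unit cost, every coefficient $2^{2^{j(\mu)}} \in \IQ$ is available for free as a scalar multiplier. The precomputation of the monomials $X_k^{i}$ for $k \in [\log n]$ and $i \in [n-1]$ uses $O(n \log n)$ multiplications, after which assembling $f_n$ takes $O(n^{\log n})$ additions and multiplications. The total size is $n^{O(\log n)} = 2^{O((\log n)^2)}$, which is quasi-polynomial in $n$, so $f_n \in \VQP$.

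For the lower bound $f_n \notin \overline{\VNP}$, I would re-examine the proof of \cite[Theorem~8.1]{Burgisser2000}, which follows the Heintz--Sieveking template \cite{heintz1980lower}. That argument proceeds as follows: the coefficient vectors of all $p$-definable families of $p$-complexity at most $t(n)$ in a fixed set of monomials are contained in the image of a polynomial map of degree $\mathrm{poly}(t(n))$, hence in an algebraic variety $V_{t(n)} \subseteq \IQ^{N_n}$ of degree $\mathrm{poly}(t(n))$. A B\'ezout-type diophantine bound on the number of integer points of $V_{t(n)}$ inside boxes of given height is then shown to be violated by the coefficients $2^{2^{j(\mu)}}$, because $j(\mu)$ ranges up to $n^{\log n}-1$ and so the heights are doubly exponential in a super-polynomial quantity. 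The key observation, which I would highlight, is that the very same bound applies to $\overline{\VNP}$: since $V_{t(n)}$ is already Zariski-closed, both the Zariski and Euclidean closures of the set of admissible coefficient vectors remain inside $V_{t(n)}$, and the diophantine count is unaffected by passing to the closure.

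The main obstacle is to make explicit the correspondence between the topological closure on the level of complexity classes and the Zariski (equivalently Euclidean) closure on coefficient vectors, so that B\"urgisser's algebraic-geometric input applies unchanged. I would spell this out: parametrizing the $\VNP$-families of $p$-complexity at most $t(n)$ via the straight-line programs computing the witness $g_n$ together with Valiant's hypercube summation gives a polynomial map whose image has Zariski closure inside $V_{t(n)}$; any $(f_n) \in \overline{\VNP}$ of $p$-complexity at most $t(n)$ has coefficient vectors in this closure and hence in $V_{t(n)}$. With this in place, \cite[Theorem~8.1]{Burgisser2000} goes through with $\VNP$ replaced by $\overline{\VNP}$, and combined with the VQP upper bound above yields $f_n \in \VQP \setminus \overline{\VNP}$ and hence $\VQP \not\subseteq \overline{\VNP}$.
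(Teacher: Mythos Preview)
Your high-level strategy is right and matches the paper: the $\VQP$ upper bound is immediate from the quasi-polynomial monomial count, and for the lower bound the crucial point is that B\"urgisser's argument already excludes the coefficient vector of $f_n$ from a Zariski-closed set, so the conclusion upgrades from $\VNP$ to $\overline{\VNP}$ for free. The paper says exactly this.

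However, you invoke the wrong result from \cite{Burgisser2000}. Theorem~8.1 there is the Heintz--Sieveking criterion: it applies when the coefficients of $f_n$ generate a field extension of $\IQ$ whose degree grows super-polynomially (see the paper's Theorem~\ref{thm:criterion} and the $\sqrt{p_j}$ example). All coefficients of the present $f_n$ are rational integers, so the degree of the extension is $1$ and that criterion says nothing. The relevant input is \cite[Proposition~8.5]{Burgisser2000}, which goes back to Strassen \cite{strassen1974polynomials} rather than Heintz--Sieveking, and which the paper follows explicitly. Its mechanism is also not quite what you describe: one uses the universal representation of size-$r$ straight-line programs to write the coefficients of any $\VNP$-family as values $F_\mu^{(n)}(\zeta)$ of polynomials with controlled degree and weight, then a counting lemma (\cite[Lemma~9.28]{burgisser2013algebraic}) produces a nonzero integer form $H_n$ of height at most $3$ and degree below $2^n$ vanishing on the image; finally a $4$-adic uniqueness argument shows $H_n$ cannot vanish on the specific values $2^{2^{j(\mu)}}$, a contradiction. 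Your phrase ``B\'ezout-type diophantine bound on integer points in boxes'' does not capture this; there is no box-counting, and the contradiction is arithmetic (distinct $2^n$-adic expansions in the exponent), not geometric.

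So: keep your closure observation, but replace the appeal to Theorem~8.1/Heintz--Sieveking by Proposition~8.5/Strassen, and spell out the low-height vanishing form plus the $4$-adic step. With that correction your plan coincides with the paper's proof.
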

The theorem consists of two parts. The containment in $\VQP$ follows immediately from the fact that the total number of monomials in $f_n$ is $n^{\log n}$.
For the other part, we closely follow B\"urgisser's lower bound proof (\cite{Burgisser2000}, Proposition 8.5) against $\VNP$ here, making transparent the fact that the proof works also against $\overline{\VNP}$. His proof techniques were borrowed from Strassen (\cite{strassen1974polynomials}). The idea is to use the universal representation for polynomial sequences in $\VNP$, so that we get a hold on how the coefficients of the polynomials look like. Using that, we establish polynomials $H_n$ that vanish on all the polynomial sequences in $\VNP$ (in other words, $H_n$ is in the vanishing ideal of sequences in $\VNP$), but do not vanish on $f_n$ (because the growth rate of its coefficients is too high), hence giving the separation. Since the vanishing ideal of a set characterizes its closure, we get the stronger separation, i.e., $f_n$ does not belong to the closure of $\VNP$, namely, $\overline{\VNP}$.

\begin{proof}[Proof of Theorem \ref{thm:vqpvnpbar}]
As stated above, the proof works in three stages: first, assuming the contrary and writing $f_n$ using the universal representation for the polynomial sequences in $\VNP$, then giving polynomials $H_n$ of special forms in the vanishing ideal of polynomial sequences in $\VNP$, and finally showing that $H_n$ cannot vanish on our sequence $f_n$, hence arriving at a contradiction.

Assuming $(f_n) \in \VNP$ implies the existence of a family $(g_n) \in \VP$, with $L(g_n)$ bounded by a polynomial $r(n)$, and a polynomial $u(n)$ such that \begin{equation*}
f_n(X_1, \ldots, X_{\log n}) = \sum\limits_{e \in \{0,1 \}^{u(n) - \log n}} g_n(X_1, \ldots, X_{\log n}, e_{\log n + 1}, \ldots, e_{u(n)}).
\end{equation*}

Next, we use the universal representation theorem (see \cite{strassen1974polynomials}, \cite{schnorr1977improved}) as stated in B\"urgisser's monograph (\cite{Burgisser2000}, Proposition 8.3; for a proof see \cite{burgisser2013algebraic}, Proposition 9.11) for size $r(n)$ straight-line program to get that there exist polynomials $G_{\nu}^{(n)} \in \IZ[Y_1, \ldots, Y_{q(n)}]$, 
with $q(n)$ being a polynomial in $n$ (more precisely, it is a polynomial in $r(n)$ and $u(n)$) which for $|\nu| \leq \deg g_n = n^{O(1)}$, guarantee that $\deg G_{\nu} = n^{ O(1)}, \log \wt(G_{\nu})^{(n)} = 2^{n^{O(1)}}$, and also guarantee the existence of some $\zeta \in \overline{\IQ}^{q(n)}$, such that \[g_n = \sum_{\nu}G_{\nu}^{(n)}(\zeta)X_1^{\nu_1}, \cdots, X_{u(n)}^{\nu_{u(n)}},\] where weight of a polynomial $f$, $\wt (f)$ refers to the sum of the absolute values of its coefficients. 

Now, taking exponential sum yields that \[f_n = \sum \limits_{\mu \in N_n} F_{\mu}^{(n)}(\zeta)X_1^{\mu_1} \cdots X_{\log n}^{\mu_{\log n}},\] where the polynomials $F_{\mu}^{(n)}$ are obtained as a sum of at most $2^{u(n)}$ polynomials $G_{\nu}^{(n)}$. 
Thus, we now have a good hold on $F_{\mu}^{(n)}$ i.e. $\deg F_{\mu}^{(n)} \leq \alpha(n)$ and $\log \wt(F_{\mu}^{(n)}) \leq 2 ^{\beta(n)}$, where both $\alpha(n)$ and $\beta(n)$ are polynomially bounded functions of $n$.

Thus, for $f_n$ to be in $\VNP$, the coefficients of $f_n$ should be in the image of the polynomial map $F_{\mu}^{n}: \overline{\IQ}^{q(n)} \rightarrow \overline{\IQ}^{n^{\log n}}$. In other words, we must have some $\zeta \in  \overline{\IQ}^{q(n)}$, such that for all $\mu \in N_n$, 
we have $F^{n}_{\mu}(\zeta) = 2^{2^{j(\mu)}}$, where $j(\mu) :=  \sum_{j = 1}^{\log n} \mu_{j}n^{j-1}$.
Since $F^{n}_{\mu}$ takes all the values from $2^{2^0}$ to $2^{2^{n^{\log n} -1}}$, we have a subset of indices $\tilde{N}_n \subseteq N_n$ of size $s(n):= \floor{|N_n|/n} = \floor{n^{\log n}/n}$, such that for $\sigma \in \{0, 1,\ldots, s(n)-1 \}$ and a bijection $\delta : \{0, 1,\ldots, s(n)-1 \} \rightarrow \tilde{N}_n$ with $\sigma \mapsto \delta(\sigma)$, we have $F^{n}_{\delta(\sigma)} = 2^{2^{\sigma n +1}}$.

Now we can apply Lemma 9.28 from \cite{burgisser2013algebraic} which asserts that there will be polynomials of low height (ht) (the maximum of the absolute value of the coefficients) on which these coefficients shall vanish. More precisely, there exists non-zero forms $H_{n} \in \IZ[Y_{\mu}  \mid  \mu \in \tilde{N}_n]$ 
with ht$(H_n) \leq 3$, $\deg H_{n} \leq D(n)$, and such that $H_{n}(F^{n}_{\mu} \mid \mu \in N_n) = 0$, given that $D(n)^{s(n)- q(n) -2} > \alpha(n)^{q(n)}s(n)^{s(n)}2^{\beta(n)}$. 

It can be seen that $D(n) = 2^{n}-1$ satisfies the above inequality, since $\alpha(n), \beta(n)$ and $q(n)$ are polynomially bounded and $2^n$ grows much faster than $s(n) = \floor{n^{\log n}/n}$. 
This allows us to write $H_{n} = \sum_{e} \lambda_{e} \prod_{\mu \in \tilde{N}_n}Y_{\mu}^{e_{\mu}}$, where the absolute values of $\lambda_e$ are bounded by $3$. Since  $H_n$ vanishes on the subset of coefficients of $f_n$ i.e it vanishes on $F^n_{\delta(\sigma)} = 2^{2^{\sigma n +1}}$ with $\sigma \in \{0,1,\ldots, s(n)-1 \}$, we have \[0 = H_n(F^{n}_{\mu} \mid \mu \in \tilde{N}_n) = \sum_{e} \lambda_{e}\prod_{\sigma=0 }^{s(n)-1}2^{e_{\delta(\sigma)}2^{\sigma n +1}} = \sum_{e} \lambda_{e} \cdot  4^{\sum_{\sigma} e_{\delta(\sigma)} (2^{n})^{\sigma}}.\] 
The last sum is essentially a $4$-adic integer, since firstly, $|\lambda_e| \leq 3$, and secondly, all the exponents of $4$, that is, $\sum_{\sigma} e_{\delta(\sigma)} (2^{n})^{\sigma}$ are all distinct, as they can be seen as $2^n$-adic representation since $e_{\delta(\sigma)} < 2^n$. Thus $\lambda_e$ has to be zero for all $e$. Hence $H_n$ must be identically zero, which is a contradiction.
\end{proof}

\subsection{A criterion for non-membership in \texorpdfstring{$\overline{\VNP}$}{criterion}}

In this section, we discuss a criterion B\"urgisser presented in his monograph \cite{Burgisser2000} based on a proof due to Heintz and Sieveking  which gives a set of conditions that puts a $p$-family out of $\VNP$. We observe that those conditions if satisfied, in fact, put a given $p$-family out of $\overline{\VNP}$ as well.  

\begin{theorem}\label{thm:criterion}
  Let $(p_n)$ be a sequence of polynomials over $\overline{\rat}$ and let
  $N(n)$ denote the degree of the field extension generated by the coefficients
  of $p_n$ over $\rat$. Further suppose the following holds:
  \begin{enumerate}
  \item The map $n \mapsto \lceil \log N(n) \rceil$ is not $p$-bounded.
  \item For all $n$, there is a system $G_n$ of rational polynomials of
    degree at most $D(n)$ with finite zeroset, containing the coefficient
    system of $f_n$, and such that $n \mapsto \lceil \log D(n)\rceil$
    is $p$-bounded. 
  \end{enumerate}
  Then the family $(p_n) \not\in \overline{\VNP}$. 
\end{theorem}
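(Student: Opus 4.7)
The plan is to run B\"urgisser's original proof of the $\VNP$-criterion essentially verbatim, making only one conceptual observation: the proof uses the existence of a universal polynomial parameterization $F^n$ of the coefficient vectors of $\VNP$-polynomials of the given shape, and from it extracts (via the Heintz-Sieveking lemma on polynomials of small height vanishing on images of low-complexity maps) a \emph{polynomial identity} that every coefficient vector of a $\VNP$-polynomial must satisfy. Since any polynomial identity that holds on a set automatically holds on its Zariski closure, the same identity is satisfied by every coefficient vector of an $\overline{\VNP}$-polynomial. Equivalently, one can reformulate $(p_n) \in \overline{\VNP}$ by: there exist a polynomially bounded $q(n)$ and polynomials $F_\mu^{(n)} \in \IZ[Y_1,\ldots,Y_{q(n)}]$ from the universal representation, with $\deg F_\mu^{(n)}$ and $\log\wt(F_\mu^{(n)})$ polynomially bounded, such that the coefficient vector $c_n=(c_{n,\mu})$ lies in the Zariski closure $Y_n := \overline{\mathrm{im}(F^n)}$ of the image of $F^n=(F_\mu^{(n)})_\mu$.

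Assuming for contradiction that $(p_n) \in \overline{\VNP}$, I then combine this with condition~(2): the vector $c_n$ additionally lies in $Z(G_n)$, a finite $\rat$-defined zero set. Hence $c_n \in Y_n \cap Z(G_n)$, a finite $\rat$-subscheme, which is Galois-stable. The $\rat$-Galois orbit of $c_n$ therefore sits inside $Y_n\cap Z(G_n)$, so $N(n)\le|Y_n\cap Z(G_n)|$. To bound this intersection I invoke the Heintz degree estimate, which says that the image of a polynomial map in $q(n)$ variables of degree $\le \alpha(n)$ has Zariski closure of degree $\le \alpha(n)^{q(n)}$. Pulling back along $F^n$, one studies the composed system $G_n\circ F^n$, which lives in only $q(n)$ variables and whose polynomials have degree $\le D(n)\alpha(n)$; the isolated points of its zero locus are then bounded by $(D(n)\alpha(n))^{q(n)}$. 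Either this estimate or its in-codomain counterpart via effective B\'ezout yields $\log N(n) \le O(q(n)(\log D(n)+\log\alpha(n)))$, which is $p$-bounded, contradicting condition~(1).

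The main obstacle is precisely the B\'ezout-type estimate in the second paragraph, because the naive bound $|Y_n\cap Z(G_n)|\le|Z(G_n)|$ is useless: the ambient space has exponentially many coordinates (one per coefficient of $p_n$), so $|Z(G_n)|$ in isolation can be as large as $D(n)^{\exp(n)}$. The crucial leverage is that $Y_n$ has dimension $\le q(n)=n^{O(1)}$, either by pulling back to parameter space and handling possibly positive-dimensional fibers by passing to isolated solutions of the pullback system, or by a generic finite projection of $Y_n$ along the lines of Heintz-Sieveking. These are exactly the tools already present in the original $\VNP$-criterion, and they transfer without modification to the Zariski closure; this is the content of the ``stronger observation'' noted in the paper, and it completes the proof.
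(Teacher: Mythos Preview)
Your proposal is correct and matches the paper's approach: the paper does not give an independent proof but simply refers to B\"urgisser's Theorem~8.1 and observes that the original argument already proves non-membership in the Zariski closure of $\VNP$, since the contradiction derived there is to the coefficient vector lying in $\overline{\mathrm{im}(F^n)}$ rather than in $\mathrm{im}(F^n)$ itself. Your write-up fleshes out exactly this observation with the correct ingredients (the $\IQ$-definability of $Y_n$, Galois-stability of $Y_n\cap Z(G_n)$, and the Heintz degree bound exploiting $\dim Y_n\le q(n)$), so there is nothing to add.
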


Thus the above theorem shows that certain $p$-families with algebraic coefficients of high degree are not contained in $\overline{\VNP}$. We now give a simple example from \cite{Burgisser2000} to illustrate the theorem. 
\begin{example}
Consider the following multivariate family defined as 
\[ p_n = \sum_{e \in \{0,1 \}^n  \char`\\ 0}  \sqrt{p_{j(e)}} X^{e}, \]
where $j(e) = \sum_{s=1}^n e_s2^{s-1}$ and $p_j$ refers to the $j$-th prime number. Then using the above Theorem~\ref{thm:criterion}, we can conclude that $p_n \notin \overline{\VNP}$.  This is because the degree of field extension $N(n) = [\IQ(\sqrt{p_j} \mid 1 \leq j \leq 2^n):\IQ] = 2^{2^n - 1}$ (see for example \cite{burgisser2013algebraic}, Lemma 9.20), hence condition~1 above is satisfied. Condition~2 is also satisfied because the coefficients are the roots of the system $G_{n} = \{Z_j^2 - p_j \mid 1 \leq j < 2^n \}$, with $D(n) = 2$.
\end{example}

For a proof of the theorem, we refer the readers to [\cite{Burgisser2000},Theorem 8.1]. We point out that the proof in its original form already works. In his proof, he wanted to conclude that $f_n \notin \VNP$. However, along the way, he arrives at a contradiction to the assertion that $f_n$ is contained in the Zariski closure of $\VNP$, which is exactly what is now known as $\overline{\VNP}$. During the time of the original proof, the complexity class $\overline{\VNP}$ was not defined.

\section*{Acknowledgements}
We thank Michael Forbes for illuminating discussions and for telling us about his (correct) intuition concerning Nisan's result.
We thank the Simons Institute for the Theory of Computing (Berkeley),
Schloss Dagstuhl - Leibniz-Zentrum f\"ur Informatik (Dagstuhl),
and the International Centre for Theoretical Sciences (Bengaluru),
for hosting us during several phases of this research.

\bibliographystyle{alpha}
\bibliography{abps}

\end{document}